\def\real{\mathbb R}
\def\Prob{\mathbb {P}}
\def\y{\pmb{y}}
\def\b{\pmb{b}}
\def\P{\mathcal{P}}
\def\A{\mathcal{A}}
\def\bbeta{\pmb{\beta}}
\def\bomega{\pmb{\omega}}
\newtheorem{thm}{Theorem}
\newtheorem*{thm*}{Theorem}
\newtheorem{lem}{Lemma}
\newtheorem{cor}{Corollary}
\newtheorem{prop}{Proposition}
\DeclareMathOperator*{\argmin}{arg\,min}
\newcommand{\snorm}[1]{\| #1 \|}
\def\bjk{\beta^M_{jk}}
\def\htjk{\widehat{\theta}^M_{jk}}
\def\hbjk{\widehat{\beta}^M_{jk}}
\def\tjk{\theta^M_{jk}}
\def\perpjk{\widehat{\mathcal{A}}^M_{jk}}
\def\interjk{\mathcal{I}_{jk}}
\def\Shat{\widehat{S}^M_{jk}}
\newcommand{\fromjacob}[1]{{\noindent\color{blue}From Jacob: #1}}
\newcommand{\fromyiling}[1]{{\noindent\color{purple}From Yiling: #1}}
\newcommand{\blind}{0}
\begin{document}

\def\spacingset#1{\renewcommand{\baselinestretch}%
{#1}\small\normalsize} \spacingset{1}

%%%%%%%%%%%%%%%%%%%%%%%%%%%%%%%%%%%%%%%%%%%%%%%%%%%%%%%%%%%%%%%%%%%%%%%%%%%%%%
\date{}
\if0\blind
{
  \title{\bf Reluctant Interaction Inference after Additive Modeling}
  \author{Yiling Huang \\
    Department of Statistics, University of Michigan\\
    Snigdha Panigrahi\thanks{
    The author gratefully acknowledges support from NSF CAREER Award DMS-2337882.}\hspace{.2cm}\\
    Department of Statistics, University of Michigan\\
    Guo Yu \\
    Department of Statistics and Applied Probability, \\University of California, Santa Barbara\\
    and \\
    Jacob Bien\\
    Department of Data Sciences and Operations,\\ 
    University of Southern California}
  \maketitle
} \fi

\if1\blind
{
  \bigskip
  \bigskip
  \bigskip
  \begin{center}
    {\LARGE\bf Reluctant Interaction Inference after Additive Modeling}
\end{center}
  \medskip
} \fi

\bigskip
\begin{abstract}
Additive models enjoy the flexibility of nonlinear models while still being readily understandable to humans. 
By contrast, other nonlinear models, which involve interactions between features, are not only harder to fit but also substantially more complicated to explain. 
Guided by the principle of parsimony, a data analyst therefore may naturally be reluctant to move beyond an additive model unless it is truly warranted. 

To put this principle of interaction reluctance into practice, we formulate the problem as a hypothesis test with a fitted sparse additive model (SPAM) serving as the null.
Because our hypotheses on interaction effects are formed after fitting a SPAM to the data, we adopt a selective inference approach to construct p-values that properly account for this data adaptivity.
%We adopt a selective inference approach to construct p-values for these data-adaptive hypotheses on interaction effects, which are formed only after fitting a SPAM to the given data.
Our approach makes use of external randomization to obtain the  distribution of test statistics conditional on the SPAM fit, allowing us to derive valid p-values, corrected for the over-optimism introduced by the data-adaptive process prior to the test.
Through experiments on simulated and real data, we illustrate that---even with small amounts of external randomization---this rigorous modeling approach enjoys considerable advantages over naive methods and %other selective inference methods like 
data splitting. %, recovering more accurate nonlinear models with interactions.
\end{abstract}

\noindent%
%{\it Keywords:}  3 to 6 keywords, that do not appear in the title
\vfill

\newpage
\spacingset{1.75} % DON'T change the spacing!

\section{Introduction}
\label{sec:intro}
The past decade has seen the widespread adoption of highly nonlinear models such as neural networks.  This trend has been accompanied by concerns about the reliability of such highly flexible models, when deployed in real-world settings \citep{damour2022underspecification}. In particular, \cite{fisher2019all} emphasize Leo Breiman's {\em Rashomon effect}, the observation that there can often be many equivalently predictive models \citep{breiman2001statistical}.  
In light of this, \cite{rudin2022interpretable} urge machine learners to adopt models that directly lend themselves to interpretation, such as additive models.

Additive models enjoy the flexibility of nonlinear models while still being readily understandable to humans.  
By contrast, other nonlinear models that involve interactions between features are more difficult to fit and far harder to interpret, as they require explaining how one or more features modulate the effects of other features on the response.
Given the challenges of fitting and interpreting models with interactions, analysts should be cautious about moving beyond an additive model unless it is truly warranted. 
Motivated by this reasoning, \cite{yu2019reluctant} introduce a \textit{reluctance} principle that prioritizes the main effects in their modeling approach by including only those interactions that are highly correlated with the residuals after fitting a model with main effects and their squares.

In this paper, we seek a high standard of evidence for the existence of an interaction between two features, and in line with the reluctance principle described in \cite{yu2019reluctant}, we ask the following question: 
is there statistical evidence that a linear interaction between features is at play after having fit a SPAM to the data?  
We formulate this task as a hypothesis testing problem in which the SPAM fitted to our data acts as the null model.  
The p-values from our hypothesis testing approach guide us to a decision on whether to add specific interaction effects to the SPAM.

Taking such a testing approach, as natural as it may seem, falls outside the scope of classical statistical theory.
This is because the hypotheses being tested are data-adaptive in nature, involving a SPAM fit that is obtained only \textit{after} looking at the data.
Typically, a SPAM is fitted by solving a group lasso problem, where each group consists of the basis expansions for a feature, and the group lasso penalty promotes the selection of relevant main effects in the model.
A naive approach that ignores the data-adaptive nature of the tested hypotheses---using simple z-tests or t-tests for regression to compute p-values---may result in overly optimistic p-values. 
As a consequence, spurious interactions are more likely to be included in the model, even though they contribute no additional explanatory power beyond what is already captured by the SPAM fit, ultimately leading to a less accurate and a less interpretable model.

Our paper develops a selective inference method for constructing valid p-values for data-adaptive hypotheses on interaction effects, grounded in the principle of reluctance.
Unlike the naive p-values, our approach properly accounts for the SPAM fit and corrects for the over-optimism introduced by the data-adaptive process that comes prior to the test. 
More precisely, we derive valid p-values from the conditional distribution of test statistics, given the groups of main effects included in the SPAM fit. 
This conditional distribution is typically difficult to characterize and lacks a closed-form expression. 
To overcome this challenge, we make use of recently developed machinery in \cite{panigrahi2023approximate, huang2023selective} that add external Gaussian randomization to the group lasso objective.
The main highlights of our randomized approach are as follows:
\begin{enumerate}[leftmargin=0pt, labelsep=1em]
\item[(1)] The external randomization in our method facilitates tractable inference. 
By applying a change of variables to the Gaussian density of the randomization, we obtain a closed-form expression for the conditional distribution that accounts for the SPAM fit.
We present an efficient algorithm for computing approximate p-values and confidence intervals based on the conditional distribution.
\item[(2)] The added randomization generates a noisy version of the data, enabling a flexible lever to control how much information in our data is used in the first-stage SPAM fit.
This is similar to data splitting, another randomized method where the split proportion controls the amount of data used for selection versus inference. 
However, unlike data splitting, our p-values are based on statistics from the full dataset, rather than relying on just the holdout data used for inference in a splitting-based approach.

As a result, our p-values are more powerful than those from data splitting. 
Moreover, our method remains stable even as non-linear expansions of predictors and interactions create high correlations among features; by contrast, data splitting can fail to provide p-values in such scenarios, due to the smaller sample size of the holdout data which often leads to rank-deficient design matrices.
\item[(3)] In our numerical experiments, conducted across a wide range of settings, we employ Gaussian randomization variables with a small variance parameter. 
This ensures that the SPAM fit on the randomized (noisy) data closely matches the fit obtained using the full dataset. 
As a result, we are able to use nearly all the available data during both the main effects modeling stage and the interaction inference stage, thereby making better use of the available data.
We show that by taking the more rigorous selective inference approach, we consistently obtain a more accurate non-linear model than both the naive approach and the data splitting method.
\end{enumerate}

\subsection{Related work}
Before proceeding, we briefly review some work in three areas related to our contributions: nonlinear additive modeling, interaction modeling, and selective inference.

\noindent\textbf{Additive modeling}. \ Additive models offer the best of both worlds: they are powerful tools for nonlinear modeling while still producing fairly interpretable models. The work by \cite{ravikumar2009sparse} introduced the use of a group lasso-type of penalty or $\ell_2$-penalization to fit SPAMs to high-dimensional data (relatedly, \cite{lin2006component}).
\cite{lou2016sparse} extended this idea to high-dimensional settings by proposing a nested group lasso penalty, which allows for an adaptive choice of features to be treated as either linear or nonlinear.  Other extensions include \cite{chouldechova2015generalized,tay2020reluctant,haris2022generalized}.
When a small number of interactions are incorporated, additive models have been shown to achieve predictive accuracy on par with state-of-the-art nonlinear models, such as random forests, as demonstrated in the work of \cite{caruana2015intelligible}.

\noindent\textbf{Interaction modeling}. \ A standard approach for having main effects inform which interactions are included is through the hierarchical interaction assumption (also known as marginality and heredity) \cite{nelder1977reformulation, peixoto1987hierarchical}. Over the past few decades, several methods have formulated optimization-based approaches based on hierarchy, including \cite{bien2013lasso, yuan2009structured, haris2016convex,hazimeh2020learning}. 
%In \cite{yu2019reluctant}, a similar approach is proposed, where the inclusion of interactions is determined in a more flexible manner, as opposed to the hierarchical approaches that often rely on restrictive assumptions, such as perfect screening of main effects or even assumptions about the data type of features.
By contrast, \cite{yu2019reluctant} move away from hierarchy and instead adopt the reluctance principle.  In their approach,  interactions are considered for inclusion if they contribute to explaining the variability of the response in addition to an additive model.
These existing methods focus primarily on building a predictive model with interactions, with relatively few providing inferential guarantees about individual interaction effects.
In our work, we aim to rigorously implement the reluctant approach for including interactions to additive models by adopting a confirmatory framework based on hypothesis testing.

\noindent\textbf{Selective inference}. \ To test data-adaptive hypotheses, which are formed after examining the data, selective inference tools are essential. 
A selective inference approach to our problem, where the null hypothesis is based on the SPAM fit to the given data, involves obtaining p-values by conditioning on this initial SPAM fit before testing for interactions.
Early work along these lines, such as \cite{lee2016exact}, relied on the polyhedral nature  of the conditioning event of the lasso to construct inferential procedures. 
As a result, this approach does not extend to many other selection problems, including the selection of a SPAM in our case, where the conditioning event is no longer polyhedral in shape.

\cite{panigrahi2023approximate} introduced a conditional selective inference method after solving a group lasso problem with Gaussian data, which was later developed for more general models using a maximum likelihood approach in \cite{huang2023selective}.
This line of work proposed adding external randomization, in the form of Gaussian noise, to the group lasso optimization objective to obtain a tractable conditional distribution, without depending on the specific geometry of the selection event.
Furthermore, the randomized approach has the added benefit of producing more powerful inference compared to non-randomized methods, as shown for high-dimensional regression problems in \cite{panigrahi2023approximate, kivaranovic2024tight, panigrahi2024exact}.

\cite{panigrahi2023approximateML} proposed a maximum likelihood approach incorporating randomization for inference, which we adopt in this paper to perform inference on the interaction effects, given our initial SPAM fit.
New variants of data splitting have been proposed for Gaussian data in \cite{rasines2023splitting} and extended to other parametric distributions in \cite{leiner2023data,neufeld2024data,dharamshi2025generalized}.
However, like the widely used sample splitting approach, these methods condition on all the data used for selection, thereby conditioning on substantially more information than is necessary.
In a parallel line of work, simultaneous methods for selective inference, as pursued in \cite{berk2013valid, mccloskey2024hybrid, zrnic2024locally}, provide guarantees for the set of plausible data-adaptive targets that could have been formed, rather than for the observed event, such as the observed SPAM fit in our problem. 
Moreover, there is no existing work on how this approach could be applied to inference after the group lasso in an easily tractable manner.
Finally, we note that selective inference methods have been developed for non-linear regression models that do not involve additive modeling.
For example, \cite{neufeld2022tree, bakshi2024inference} develop selective inference for regression trees, while \cite{suzumura2017selective} develop methods for higher-order interaction models, leveraging the tree structure of various interaction orders.

The rest of this paper is organized as follows.
In Section \ref{sec:proposal}, we present the proposed method and illustrate its main features with a motivating data example.
In Section \ref{sec:method}, we derive the conditional distribution of key statistics, which yields p-values for testing the data-adaptive hypotheses based on the SPAM fit to given data.
In Section \ref{sec: sims}, we study the performance of our method under various settings, and in Section \ref{sec:application}, we apply our method to NYC flight data to predict flight departure delay. 
We conclude the work with Section \ref{sec:conclusion}. All proofs and additional data analysis results for Section \ref{sec:application} are collected in the appendix.

\setlength{\abovedisplayskip}{2pt}   

% Space below a full-width display when the following line is “normal” text.
\setlength{\belowdisplayskip}{2pt}   

% If the line before/after the display is short (e.g. end of a paragraph),
% these kicks in:
\setlength{\abovedisplayshortskip}{0pt}
\setlength{\belowdisplayshortskip}{0pt}

\section{The outline of our method} 
\label{sec:proposal}

Motivated by the reluctance principle in modeling interactions, we introduce a hypothesis testing method to determine if there is an interaction between features after fitting a SPAM. 
Our method, described in Algorithm \ref{algo:reluctant}, proceeds in two steps.

\noindent\textbf{Step 1:} 
In Step 1, we fit a SPAM by solving a randomized group lasso problem, which we display in \eqref{eq:grlasso} after first introducing some notation. Each feature $X_j\in\mathbb R^n$ is expanded using a $B_j$-dimensional spline basis to give $\Psi_j \in \mathbb{R}^{n \times B_j}$, leading to a design matrix $\Psi = [\Psi_1 \cdots \Psi_p] \in \mathbb{R}^{n \times q}$, whose $q = \sum_{j = 1}^p B_j$ columns are naturally partitioned into $p$ groups.  
%    Fixing some notations, let $\Psi_j \in \mathbb{R}^{n \times B_j}$ be  a $B_j$-dimensional spline basis expansion for the feature $X_j$. 
%We represent our design matrix as $\Psi = [\Psi_1 \cdots \Psi_p] \in \mathbb{R}^{n \times q}$, where $q = \sum_{j = 1}^p B_j$.
%This matrix contains the basis expansions for all our features.
%We include the column indices of $\Psi_j$ in $\Psi$ in the set $g_j$.

%For a fixed value $\tau^2$, we let $\omega \sim N({0}_q, \tau^2 I_q)$ be a randomization variable that is drawn independently of $y$.
For a user-specified positive definite matrix $\Omega$, let $\omega \sim N({0}_q, \Omega)$ be a randomization variable drawn independently of $y$.
To identify the main effects in our model, we solve:
\begin{align}
            \widehat{\beta}
            = 
            \argmin_{\left(\beta_{g_j} \in \real^{B_j}\right)_{j \in [p]}} 
            \frac 1{2}\left\| y-\sum_{j=1}^p\Psi_j\beta_{g_j}\right\|_2^2
            +\sum_{j=1}^p \left(\lambda_j \|\beta_{g_j}\|_2 + \frac{\epsilon}{2} \|\beta_{g_j}\|^2_2\right) 
            - \left(\beta_{g_1}^\top\ \dots\ \beta_{g_p}^\top\right)\omega,
            \label{eq:grlasso}
        \end{align}
where $\left\{\lambda_j\right\}_{j \in [p]}$ are positive tuning parameters, and $\epsilon > 0$ is a small ridge penalty parameter that guarantees strict convexity of the problem and thus the uniqueness of its solution.

Based on the solution 
$\widehat{\beta}= \begin{pmatrix} \widehat{\beta}_{g_1}^\top & \ldots & \widehat{\beta}^\top_{g_p}
            \end{pmatrix}^\top$,
let
\begin{align}
        \widehat{M}  = \left\{j \in[p]: \widehat{\beta}_{g_j}\neq {0}_{B_j}\right\}
        \label{eq:setM}
\end{align}
denote the set of selected groups (features). 
Finally, suppose $M$ is the observed value of $\widehat{M}$ for the given data and randomization $(y, \omega)$, which determines the additive main effects selected in the SPAM fit.
Let $q_M = \sum_{j \in M}B_j$, and let $\Psi_M \in \real^{n \times q_M}$ and $\Psi_{-M} \in \real^{n \times (q-q_M)}$ be a column concatenation of the matrices $\{\Psi_{g_j}: j \in M\}$ and $\{\Psi_{g_j}: j \notin M\}$, respectively.

\noindent\textbf{Step 2:} In Step 2, we perform a hypothesis test to determine if we must include linear interactions between features while taking into account the SPAM fit in Step 1. 
   
   For a pair $(j,k) \in [p]^2$, we denote the interaction between the features $X_j$ and $X_k$ by $\interjk = X_j \circ X_k \in \mathbb{R}^n$, where $\circ$ represents the component-wise product of two vectors.
   We use $\mathcal{T}^{M} \subseteq [p]^2$ to represent the indices of interactions that we want to test for inclusion in our model.
   Note that we allow this set to depend on data through the main effects model $M$ from the previous step, which is indicated by  superscript $M$.
   For example, under a weak hierarchy rule for including interactions, 
   $\mathcal{T}_{M} = \left\{(j,k) \in [p]^2: j\in M~\mathrm{or}~k\in  M \right\}$.

   In this step, we conduct a marginal screening of interactions in the set $\mathcal{T}_{M}$. 
   Specifically, for each $(j,k)\in \mathcal{T}^{M}$, we obtain a two-sided $p$-value for $H^{jk}_0: \tjk = 0$  in the model:
      \begin{equation}
        y \sim N(\Psi_{M} \bjk + \interjk \tjk, \sigma^2 I_n),
        \label{eq:postmodel}
    \end{equation}
   where $\bjk \in \real^{q_M}$ and $\tjk \in \real$ represent the main effect parameter vector and the interaction effect for $\interjk$, respectively.

\begin{algorithm}[]
\setstretch{1.35}
\label{algo:reluctant}
\caption{The proposed reluctant interaction inference framework}
    \begin{algorithmic}[100]
     \State \textbf{Step 1}: \\
     \hspace*{\algorithmicindent} \textbf{Require:} $y$: the response vector, $\Psi$: the design matrix using a spline expansion of features, $\omega$: the randomization vector\\
     
     \hspace*{\algorithmicindent} Fit a SPAM by solving \eqref{eq:grlasso}. \\
        
    \hspace*{\algorithmicindent} \textbf{Output:} 
    	$M$
      \State \textbf{Step 2}:\\ 
      %\hspace*{\algorithmicindent} \textbf{Input:} $\widehat{M} = M$\\
      \hspace*{\algorithmicindent} \textbf{Specify:} $\mathcal{T}^{M} \subseteq [p]^2$, a set of interaction indices that is allowed to depend on $M$
      
      \State\hspace{\algorithmicindent} %\textbf{Step 2}: 
      Using the model in \eqref{eq:postmodel}, obtain $p$-value associated with the null hypothesis $H^{jk}_0: \tjk = 0$ for each $(j, k) \in \mathcal{T}^{M}$.
      
      %\hspace*{\algorithmicindent} 
      \textbf{Output:} 
    	$\mathcal{S}^M \subseteq \mathcal{T}^M$: the set of interactions with significant p-values 
  \end{algorithmic}
\end{algorithm}

\subsection{A motivating data example}
\label{subsec:motivating_eg}

We simulate a data example to illustrate the inferential performance of our two-step method described in Algorithm
\ref{algo:reluctant}.
In particular, this example highlights the role of external randomization and demonstrates a trade-off between the quality of the first-step SPAM fit and the inferential power for recovering interactions in the second step. The data are generated from a nonlinear model, with details deferred to Section \ref{sec: sims}.

We compare inference for the interaction coefficient $\tjk$ using three different methods: (i) naive inference, which ignores the fact that a main-effects model was selected using the data, and proceeds with a $z$-test as if the SPAM were fixed in advance; (ii) data splitting, which partitions the data into two independent subsamples, one for selecting the SPAM and the other for inference on $\tjk$; and (iii) our proposed method using conditional inference. 

For our method, we implement randomization using $\Omega = \sigma^2 \frac{1-r}{r} \Psi^\top \Psi$, a natural choice for the randomization covariance matrix discussed further in Section \ref{subsec:sim_methods}. 
In the data splitting approach, $(r \times 100)\%$ of the data is used for selection. The parameter $r \in (0,1)$ determines the trade-off between data used for selection and data reserved for inference. 
While $r$ plays a similar role in our method, an important distinction is that we use the full data for inference rather than using only the holdout data as is done in data splitting.

Figure \ref{fig:ecdf_toy} shows the empirical CDFs of the pivots for inference on $\tjk$ constructed using the three methods. 
These pivots are obtained by applying a probability integral transform to the distribution of the test statistic used in each method; see Section~\ref{sec: sims} for their exact definitions. If the inference is valid, these pivots are expected to follow a uniform distribution.
While the pivots from naive inference deviate from the uniform distribution, as expected, the pivots from data splitting and the proposed method closely follow the uniform distribution, indicating the validity of inference with both approaches.

\begin{figure}[h]
    \centering
    \includegraphics[width=1.\linewidth]{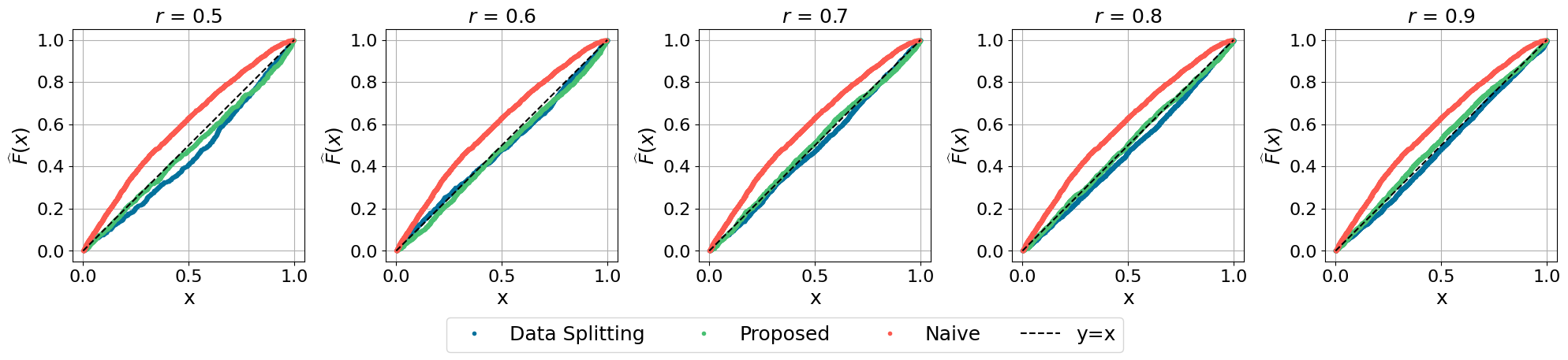}
    \caption{ECDF of the uniform pivots obtained by naive inference, data splitting, and the proposed method with varying amount of information used for selection ($r$)}
    \label{fig:ecdf_toy}
\end{figure}

In Figure \ref{fig:F1_len_toy}, we vary $r$ along the x-axis of the plots. The left panel displays the F1 scores, defined as the harmonic mean of precision and recall, which measure the quality of main effects recovery, or equivalently, the SPAM fit in Step 1. 
As $r$ approaches 1---meaning a larger fraction of the data is used for fitting the SPAM---we observe that the selection quality of both the proposed method and data splitting gradually converges to that of the naive method, which uses the full dataset for selecting main effects.

The right panel of Figure \ref{fig:F1_len_toy} shows that inference for $\tjk$ following the SPAM fit must account for the selection of the main effects. 
As expected, the average lengths of the confidence intervals produced by data splitting and our method widen as $r$ increases toward 1. 
However, the proposed method consistently yields shorter confidence intervals compared to data splitting, with the advantage in inferential power becoming even more pronounced for  larger values of $r$.

\begin{figure}[h]
    \centering
    \includegraphics[width=\linewidth]{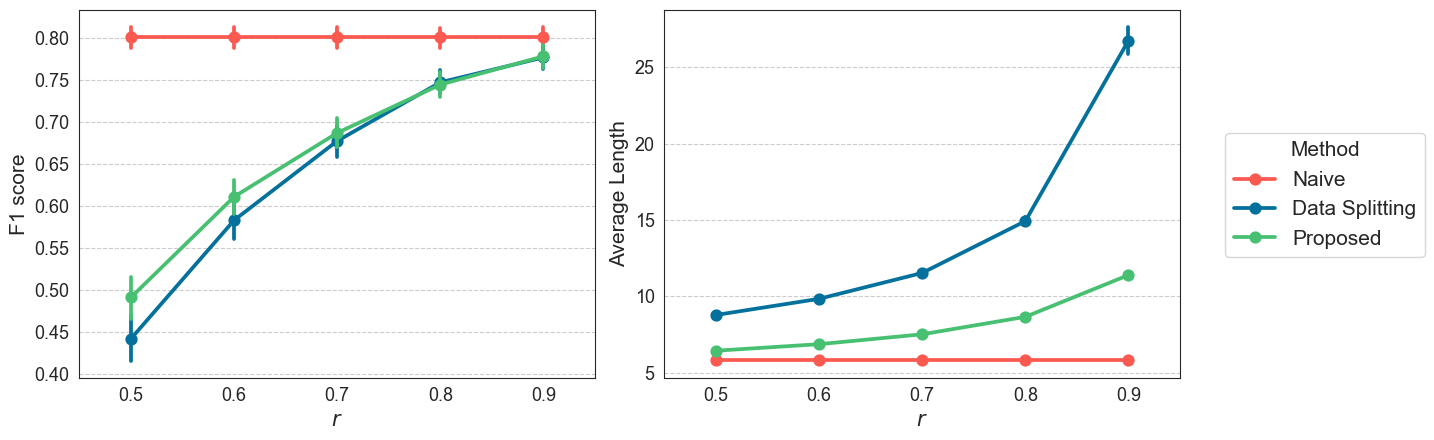}
    \caption{F1 scores for true main effects selection (left) and average confidence interval lengths for interaction coefficients (right) obtained by naive inference, data splitting, and the proposed method with varying amount of information used for selection ($r$)}
    \label{fig:F1_len_toy}
\end{figure}
Since we would like randomization to have minimal effect on the performance of Step 1, we fix $r=0.9$ for all following numerical experiments, at which point our randomized method closely matches the selection quality of the naive approach.  As such, our method enables the use of nearly the entire dataset for fitting the SPAM while still leveraging the full data for powerful inference on interactions.
These observations empirically support the main highlights (1), (2), and (3) of our method, outlined in Section \ref{sec:intro}.

\setlength{\abovedisplayskip}{2pt}   

% Space below a full-width display when the following line is “normal” text.
\setlength{\belowdisplayskip}{2pt}   

% If the line before/after the display is short (e.g. end of a paragraph),
% these kicks in:
\setlength{\abovedisplayshortskip}{0pt}
\setlength{\belowdisplayshortskip}{0pt}

\section{Inference procedures}
\label{sec:method}

In this section, we derive the p-values that we will use in Step 2.1 of Algorithm~\ref{algo:reluctant}.  Although $(y, \omega)$ have a simple joint normal distribution due to their independence, the conditional distribution of $(y, \omega)$ given the SPAM fit lacks a straightforward characterization.
To address this, we perform a change of variables on the known density of the randomization variables. This reduces the problem to characterizing the conditional distribution of $y$ and the randomized group lasso estimators. The change of variables transformation serves a dual purpose: (i) it yields a simple representation of the selection event as a set of sign constraints on the randomized group lasso estimator, and (ii) it enables a closed-form characterization of the joint conditional distribution of $y$ and the group lasso estimator.

We provide here a roadmap of what is to come.  In particular, in Section~\ref{sec:event}, we show that the KKT conditions, when suitably parameterized yield a simple characterization of the event that a certain set of features was selected by the SPAM fit in Step 1.  In Section~\ref{sec:statistics}, we describe the key statistics that will be involved in the p-value and derive their distribution if the set $M$ were fixed rather than selected based on the data.  In Section~\ref{sec:conditional}, we work out the exact conditional distribution of these key statistics given the event that SPAM selected a certain set of main effects.  In Section~\ref{sec:laplace}, we develop a Laplace approximation to the likelihood derived from this conditional distribution, making inference easier to implement.  Finally, in Section~\ref{sec:approxmle}, we show how to perform inference in this approximate likelihood.

\subsection{The conditioning event}
\label{sec:event}

To account for the selection of the SPAM prior to conducting tests for screening interactions, we condition on the event $\{\widehat{M} = M\}$.
Although this event does not have a simple description, there exists a subset of this event that can be characterized by a set of simple sign constraints on the group lasso estimator $\widehat{\beta}$.
Below, we review this characterization, which was identified in previous work by \cite{panigrahi2023approximate, huang2023selective}.

Hereafter, in the solution vector $\widehat{\beta}$, we assume without loss of generality that the nonzero groups where $\widehat{\beta}_{g_j} \neq 0$ are stacked above the zero groups where $\widehat{\beta}_{g_j} = 0$.
Consider the stationarity conditions of \eqref{eq:grlasso}, which implies
\begin{align}
\label{eq:KKTsubgradform}
	0_q &\in - \Psi^\top y - \omega + \left(\Psi^\top \Psi + \epsilon I_q \right) \begin{pmatrix}
		\left(\widehat{\beta}_{g_j}\right)_{j \in M}\\
		\left(0_{B_j}\right)_{j \notin M}
	\end{pmatrix}
 + 
\begin{pmatrix}
    \left(\lambda_j \dfrac{\widehat{\beta}_{g_j}}{\snorm{\widehat{\beta}_{g_j}}_2}\right)_{j \in M} \\
    \left(\lambda_j \partial\snorm{\widehat{\beta}_{g_j}}_2\right)_{j \notin M}
\end{pmatrix}.
\end{align}
Even though the stationarity conditions only need a subgradient to exist, the strictly convex quadratic loss function ensures that the subgradient is always unique. 
More precisely, for $j\notin M$, there exists a unique subgradient $\widehat{z}_{j} \in \partial\snorm{\widehat{\beta}_{g_j}}_2$ ($\iff \snorm{\widehat{z}_{j}}_2 \leq 1$) such that equality in \eqref{eq:KKTsubgradform} holds, i.e., 
\begin{comment}
\begin{equation*}
%\widehat{z}_j \in \partial\snorm{\widehat{\beta}_{g_j}}_2, \text{ and }
\snorm{\widehat{z}_{j}}_2 \leq 1,
\end{equation*}
\end{comment}
\begin{align}
\label{eq:KKT}
\omega &= - \Psi^\top y + \left(\Psi^\top \Psi + \epsilon I_q \right) \begin{pmatrix}
		\left(\widehat\gamma_j \widehat{u}_j \right)_{j \in M}\\
		\left(0_{B_j}\right)_{j \notin M}
	\end{pmatrix}
 + \begin{pmatrix}
    \left(\lambda_j \widehat{u}_j\right)_{j \in M} \\
    \left(\lambda_j \widehat{z}_j\right)_{j \notin M}
\end{pmatrix},
\end{align}
where, for $j\in M$,
$
 \widehat\gamma_j = \snorm{\widehat{\beta}_{g_j}}_2, \text{ and } \widehat\gamma_j >0,\;
 \ \widehat{u}_j = \frac{\widehat{\beta}_{g_j}}{\snorm{\widehat{\beta}_{g_j}}_2}.
$

%\fromjacob{In the previous line, I would write this as $\widehat{z}_j \in \partial\snorm{\widehat{\beta}_{g_j}}_2$ (rather than the equality), but actually for $j\notin M$ isn't this equivalent to $\snorm{\widehat{z}_{j}}_2 \leq 1$? So isn't it enough to just write $\snorm{\widehat{z}_{j}}_2 \leq 1$?}

%\fromjacob{If we want to treat the subgradient as a set then we might want to slightly modify the first display when the KKT conditions are presented.}
%\fromyiling{Thank you, Jacob! I have modified the KKT conditions a little bit for better rigor.}
Additionally, we define
\begin{align*}
    \widehat{{\gamma}} = (\widehat{\gamma}_j: j \in M)^\top \in \real^{|M|}, \quad \widehat{{U}} = \left( \widehat{u}_j: j \in M \right) \in \real^{q_M}, \quad
    \widehat{{Z}} = \left( \widehat{z}_j: j \notin M \right) \in \real^{q - q_M},
\end{align*}
as estimators that satisfy \eqref{eq:KKT}, where $\left(\widehat{u}_j: j \in M \right)$ denotes the stacking of $|M|$ vectors $\widehat{u}_j$'s, indexed by the set $M$, into one vector, and similarly for $\left( \widehat{z}_j: j \notin M \right)$.

We have written the KKT conditions in terms of $(\widehat{{\gamma}},\widehat{{U}},\widehat{{Z}})$ in place of $(\widehat{{\beta}},\widehat{{U}},\widehat{{Z}})$ because of the following lemma, which establishes a simplifying equivalence of events. 

%\fromhugo{Added dimensions for $\hat{U}$ and $\hat{Z}$, and removed the transpose notation for all $\hat{\theta}_{jk}^M$ to reflect the fact that $\hat{\theta}_{jk}^M$ is a scaler. Should we define $(\widehat{{\gamma}},\widehat{{U}},\widehat{{Z}})$ as a tuple or a concatenated vector as $(\widehat{{\gamma}}^\top,\widehat{{U}}^\top,\widehat{{Z}}^\top)^\top$? I modified the rest of the manuscript with the latter notation when a long vector is referred to. Will be happy to change it back.}

\begin{lem}[Equivalent characterization of model selection \citep{panigrahi2023approximate, huang2023selective}]
\label{lem:equiv}
Consider solving \eqref{eq:grlasso}.
Then, it holds that
\begingroup
  % make them very tight just around this one environment
  \setlength{\abovedisplayskip}{2pt}
  \setlength{\belowdisplayskip}{2pt}
  \setlength{\abovedisplayshortskip}{0pt}
  \setlength{\belowdisplayshortskip}{0pt}
\begin{align}
	\left\{\widehat{M} = M,\ \widehat{U} = U,\ \widehat{Z} = Z\right\}
  	= 
   \left\{\widehat{\gamma}\succ {0}_{|M|},\ \widehat{{U}} = U,\ \widehat{{Z}}\ = Z\right\}.
   \label{eq:equivalence}
\end{align}
\endgroup
\end{lem}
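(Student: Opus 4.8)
The plan is to establish the set equality in \eqref{eq:equivalence} by double inclusion, exploiting the fact that the strictly convex objective in \eqref{eq:grlasso} makes the KKT system \eqref{eq:KKT} a faithful parameterization of the solution. The key observation is that $(\widehat\gamma, \widehat U, \widehat Z)$ together with $M$ encode exactly the same information as $(\widehat\beta, \widehat Z)$: given the selected set $M$, each nonzero block decomposes uniquely as $\widehat\beta_{g_j} = \widehat\gamma_j \widehat u_j$ with $\widehat\gamma_j = \|\widehat\beta_{g_j}\|_2 > 0$ and $\widehat u_j$ a unit vector, and conversely any $(\gamma, U)$ with $\gamma \succ 0_{|M|}$ and unit-norm blocks $u_j$ reconstructs a unique $\widehat\beta$ supported on $M$. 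So the real content is relating the event $\{\widehat M = M\}$ — i.e., ``the blocks indexed by $M$ are the nonzero ones'' — to the event $\{\widehat\gamma \succ 0_{|M|}\}$ under the standing convention that nonzero groups are stacked on top.

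For the forward inclusion ($\subseteq$): suppose $\widehat M = M$, $\widehat U = U$, $\widehat Z = Z$. By definition \eqref{eq:setM}, $\widehat M = M$ means $\widehat\beta_{g_j} \neq 0_{B_j}$ for $j \in M$ and $\widehat\beta_{g_j} = 0_{B_j}$ for $j \notin M$; the former gives $\widehat\gamma_j = \|\widehat\beta_{g_j}\|_2 > 0$ for every $j \in M$, i.e., $\widehat\gamma \succ 0_{|M|}$. The latter is already the convention under which the parameterization \eqref{eq:KKT} was written, so nothing further is needed. For the reverse inclusion ($\supseteq$): suppose $\widehat\gamma \succ 0_{|M|}$, $\widehat U = U$, $\widehat Z = Z$, where here $\widehat\gamma, \widehat U, \widehat Z$ are understood as the (unique) solution of \eqref{eq:KKT} with the candidate support $M$ placed on top. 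Then the reconstructed vector $\widehat\beta$ has $\widehat\beta_{g_j} = \widehat\gamma_j \widehat u_j \neq 0_{B_j}$ for $j \in M$ (since $\widehat\gamma_j > 0$ and $\|\widehat u_j\|_2 = 1$) and $\widehat\beta_{g_j} = 0_{B_j}$ for $j \notin M$. One then checks that this $\widehat\beta$, paired with the subgradient built from $\widehat U$ (on $M$) and $\widehat Z$ (off $M$), satisfies the original subgradient stationarity condition \eqref{eq:KKTsubgradform}; by strict convexity this $\widehat\beta$ is \emph{the} minimizer of \eqref{eq:grlasso}, and reading off its support via \eqref{eq:setM} gives $\widehat M = M$. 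Finally $\widehat U = U$ and $\widehat Z = Z$ hold by construction, completing the inclusion.

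The main subtlety — and the only place care is genuinely required — is the direction of the argument that uses uniqueness: one must make sure that the $(\widehat\gamma, \widehat U, \widehat Z)$ appearing on the right-hand side of \eqref{eq:equivalence} really does reassemble into a vector $\widehat\beta$ that is feasible for, and stationary with respect to, the \emph{original} problem \eqref{eq:grlasso}, not merely for the equality system \eqref{eq:KKT} with $M$ hard-coded as the support. The point is that the off-support subgradient condition $\|\widehat z_j\|_2 \leq 1$ for $j \notin M$ — which is part of what it means for $\widehat Z = Z$ to be an admissible value — is exactly the condition needed to certify, via \eqref{eq:KKTsubgradform}, that leaving those blocks at zero is optimal; combined with $\widehat\gamma \succ 0_{|M|}$ certifying the on-support blocks are nonzero, the resulting $\widehat\beta$ is the unique global minimizer. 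This is precisely the equivalence recorded in \cite{panigrahi2023approximate, huang2023selective}, so the proof amounts to unwinding the definitions and invoking strict convexity; no new estimates are involved.
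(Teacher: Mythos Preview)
Your proposal is correct and follows the natural route through the KKT conditions and strict convexity. Note, however, that the paper does not actually supply a proof of this lemma: it is stated as a result imported from \cite{panigrahi2023approximate, huang2023selective}, and the text surrounding it merely reviews the characterization without re-deriving it. Your argument is essentially the standard one from those references, so there is nothing to compare against in the present paper.
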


The event on the left-hand side of \eqref{eq:equivalence} is a proper subset of $\{\widehat{M}=M\}$, which is characterized by a set of positive constraints on the $\ell_2$-norms of the nonzero groups in the solution vector $\widehat{\beta}$.
Conditioning on this event takes into account the selection of main effects in the SPAM fit.
 %To interpret this result, note that if we fix the values of the sub-gradient variables $\widehat{{U}}$, $\widehat{{Z}}$ at their realized values $U$ and $Z$, the model selection event $\widehat{M} = M$ is alternatively represented by the affine event $\widehat{{\gamma}}\succ {0}_{|M|}$. The affine event $\widehat{{\gamma}}\succ {0}_{|M|}$ has a neat close-form expression, and will facilitate the integration of $\omega$ over the selection region. To take advantage of this convenient equivalence, we perform inference using estimators for the interaction coefficients and the main effects coefficients which we control for, referred to as the key statistics, conditional on the event \eqref{eq:equivalence}.
 
\subsection{The key statistics}
\label{sec:statistics}

In this section, we introduce what we call the ``key statistics'' that are involved in constructing a p-value for the null hypothesis $H_0^{jk}:\tjk=0$ in \eqref{eq:postmodel}.
In particular, our key statistics form a $(q+1)$-dimensional function of $y$, from which a p-value for testing $H_0^{jk}$ can be constructed based on their distribution conditional on ${\widehat{M} = M}$.

Denote by 
$\overline{Z}^{M}_{jk} = \begin{pmatrix}\interjk : \Psi_M
\end{pmatrix} \in \real^{n\times(q_M + 1)}$,
the interaction-augmented design matrix obtained by concatenating $\Psi_M$ with $\interjk$. 
Let
\begin{align}
\label{eq:lsmodel}
\begingroup
\renewcommand*{\arraystretch}{0.8}
    \begin{pmatrix}
    	\htjk\\
        \hbjk
    \end{pmatrix}
    \endgroup = \argmin_{\theta \in \real,\ \beta \in \real^{{q_M}}}  \left\|y - \Psi_M \beta - \interjk \theta \right\|_2^2 \ = \left(\overline{Z}^{M}_{jk}\right)^{+} {y},
\end{align}
be the least squares estimator when we perform a regression of $y$ against the derived features in $\overline{Z}^{M}_{jk}$, where $A^{+} = (A^\top A)^{-1}A^\top$ denotes the Moore-Penrose inverse of matrix $A$. %, where $\htjk \in \real$ and $\hbjk\in \real^{q_M}$.
However, the selection of the main effects not only depends on the least squares estimator based on the model \eqref{eq:postmodel}, but also on the residuals from this model fit and the features excluded from the SPAM fit. 
\begin{comment}
Therefore, the key statistics needed to perform the test also involve 
$$\perpjk = -\Psi_{-M}^\top \left(y - \overline{Z}^{M}_{jk}\begin{pmatrix}
\htjk\\
\hbjk
\end{pmatrix}\right) \in \real^{q-q_M},$$
which depends on $\Psi_{-M}$ and the residuals from the least squares model \eqref{eq:lsmodel}. \fromjacob{This appears to be the residuals not from the SPAM model but rather the residuals from the least squares that follows SPAM.} \fromyiling{Thanks for pointing out!}
\end{comment}
To see this, observe that $y$ can be written as the sum of the predicted value and the residuals from the fit as
\begin{align*}
    y = \overline{Z}^{M}_{jk}
    \begingroup
	\renewcommand*{\arraystretch}{0.8}
    \begin{pmatrix}
    	\htjk\\
        \hbjk
    \end{pmatrix} 
    \endgroup
    + \left(y - \overline{Z}^{M}_{jk}
    \begingroup
	\renewcommand*{\arraystretch}{0.8}
    \begin{pmatrix}
    	\htjk\\
        \hbjk
    \end{pmatrix}
    \endgroup
    \right).
\end{align*}
This in turn allows us to decompose $-\Psi^\top y$ involved in the KKT conditions in \eqref{eq:KKT} as
\begin{align}
\label{eq:in-terms-of-A}
    -\Psi^\top y = -\Psi^\top \overline{Z}^{M}_{jk}
    \begingroup
	\renewcommand*{\arraystretch}{0.8}
    \begin{pmatrix}
    	\htjk\\
        \hbjk
    \end{pmatrix} 
    \endgroup
    + \begingroup
	\renewcommand*{\arraystretch}{0.8}\begin{pmatrix}
        0_{q_M} \\
        \perpjk
    \end{pmatrix} \endgroup,
\end{align}
where $\perpjk = -\Psi_{-M}^\top \left(y - \overline{Z}^{M}_{jk}
\begingroup
\renewcommand*{\arraystretch}{0.8}\begin{pmatrix}
\htjk\\
\hbjk
\end{pmatrix}
\endgroup \right) \in \real^{q-q_M}$ 
involves the main effects excluded from the SPAM fit.  The $0_{q_M}$ follows from the normal equations of \eqref{eq:lsmodel}.

We denote the collection of these key statistics by
\begin{equation}\label{eq:key}
\Shat = \begin{pmatrix}
	\htjk, &
	\left(\hbjk\right)^\top, &
	\left(\perpjk\right)^\top
\end{pmatrix}^\top\in\real^{q+1}.
\end{equation}
We start by stating their distribution for a fixed $M$ (i.e., not conditioning on the event $\{\widehat{M}=M\}$) in Lemma \ref{lem:pre_sel_law}. Directly doing inference based on Lemma \ref{lem:pre_sel_law} but with $M$ replaced with $\widehat{M}$ would not properly account for the selection of main effects from SPAM and corresponds to what we will call throughout the naive approach. %, which does not account for the selection of the main effects in the SPAM fit. 

%To state our next result, we define the following matrices:
\begin{lem}[Marginal distribution of the key statistics]
\label{lem:pre_sel_law}
For any fixed set of additive main effects $M$ and any fixed pair of indices $(j,k)$,  consider the model 
$$y \sim N_n(\Psi_{M} \bjk + \interjk \tjk, \sigma^2 I_n).$$ 
In this model, %the density of $\Shat$ at $S \in \real^{q+1}$ is
%$$\phi\left(S; \nu^{M}_{jk}, \Sigma^{M}_{jk}\right),$$
$
\Shat\sim N_{q+1}\left(\nu^{M}_{jk},\Sigma^{M}_{jk}\right)
$
where 
$$\nu^{M}_{jk} = \begin{pmatrix}
    \tjk\\
    \bjk\\
    0_{q - q_M}
\end{pmatrix}
, \qquad \Sigma^{M}_{jk} = \begin{pmatrix}
	\overline{\Sigma}^{M}_{jk} & 0_{(q_M + 1) \times (q-q_M)}\\
	0_{(q - q_M) \times (q_M + 1)} & \widetilde\Sigma^M_{jk}
\end{pmatrix},$$ 
\begin{align*}
\overline{\Sigma}^{M}_{jk} &= \sigma^2\left(\left(\overline{Z}^{M}_{jk}\right)^\top \overline{Z}^{M}_{jk}\right)^{-1}\in\real^{(q_M+1)\times (q_M+1)},
\text{ and }\
\\
\ \widetilde\Sigma^M_{jk} &= \sigma^2 \Psi_{-M}^\top\left(I_q - \overline{Z}^{M}_{jk}\left(\overline{Z}^{M}_{jk}\right)^{+}\right)\Psi_{-M}\in\real^{(q-q_M)\times(q-q_M)}.   
\end{align*}
%and  $\phi\left(x; \mu_X, \Sigma_X 
%	\right)$ denotes the density of a multivariate Gaussian random variable with mean $\mu_X$ and covariance $\Sigma_X$ when evaluated at $x$. 
\end{lem}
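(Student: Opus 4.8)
The plan is to observe that $\Shat$ is an affine --- in fact, linear --- function of $y$, and then invoke the standard fact that a linear image of a Gaussian vector is Gaussian, with mean and covariance obtained by pushing $\mu = \Psi_M\bjk + \interjk\tjk$ and $\sigma^2 I_n$ through the map. Writing $\overline{Z} = \overline{Z}^M_{jk}$ for brevity and $P = \overline{Z}\,\overline{Z}^{+}$ for the orthogonal projection onto the column space of $\overline{Z}$ (here $\overline{Z}^{+} = (\overline{Z}^\top\overline{Z})^{-1}\overline{Z}^\top$, so we are implicitly using that $\overline{Z}$ has full column rank $q_M+1$), the definition \eqref{eq:lsmodel} gives $(\htjk, (\hbjk)^\top)^\top = \overline{Z}^{+}y$, and the line below \eqref{eq:in-terms-of-A} gives $\perpjk = -\Psi_{-M}^\top(y - \overline{Z}\,\overline{Z}^{+}y) = -\Psi_{-M}^\top(I_n - P)y$. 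Hence $\Shat = A\,y$ with
\[
A = \begin{pmatrix} \overline{Z}^{+} \\[2pt] -\,\Psi_{-M}^\top (I_n - P)\end{pmatrix}\in\real^{(q+1)\times n},
\qquad \Shat \sim N_{q+1}\!\big(A\mu,\ \sigma^2 A A^\top\big).
\]

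For the mean, I would use the two projection identities $\overline{Z}^{+}\overline{Z} = I_{q_M+1}$ and $(I_n-P)\overline{Z} = 0$, together with $\mu = \overline{Z}\,(\tjk,\ (\bjk)^\top)^\top$, which holds because $\overline{Z} = (\interjk : \Psi_M)$ with the coefficient ordering $(\theta,\beta)$. The first block of $A\mu$ is then $(\tjk,\ (\bjk)^\top)^\top$ and the second is $-\Psi_{-M}^\top(I_n-P)\overline{Z}\,(\tjk,\ (\bjk)^\top)^\top = 0$, giving $\nu^M_{jk}$ as stated.

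For the covariance, I would compute the three blocks of $AA^\top$ using $P^\top = P$, $P^2 = P$, and the pseudoinverse identities. The $(1,1)$ block is $\overline{Z}^{+}(\overline{Z}^{+})^\top = (\overline{Z}^\top\overline{Z})^{-1}\overline{Z}^\top\overline{Z}(\overline{Z}^\top\overline{Z})^{-1} = (\overline{Z}^\top\overline{Z})^{-1}$, yielding $\overline{\Sigma}^M_{jk}$; the $(2,2)$ block is $\Psi_{-M}^\top(I_n-P)^2\Psi_{-M} = \Psi_{-M}^\top(I_n-P)\Psi_{-M}$, yielding $\widetilde\Sigma^M_{jk}$. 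The crux is the $(1,2)$ block, which equals $-\overline{Z}^{+}(I_n-P)\Psi_{-M} = -(\overline{Z}^\top\overline{Z})^{-1}\overline{Z}^\top(I_n-P)\Psi_{-M}$; since $\overline{Z}^\top(I_n-P) = \overline{Z}^\top - \overline{Z}^\top\overline{Z}(\overline{Z}^\top\overline{Z})^{-1}\overline{Z}^\top = 0$, this block vanishes, which is exactly the block-diagonal structure of $\Sigma^M_{jk}$ and, equivalently, the independence of $(\htjk,\hbjk)$ from $\perpjk$.

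This is essentially a routine linear-algebra computation, so there is no real obstacle; the only points to flag are the standing full-column-rank assumption on $\overline{Z}^M_{jk}$ (needed for the $(\overline{Z}^\top\overline{Z})^{-1}$ form of the Moore--Penrose inverse and for $\overline{\Sigma}^M_{jk}$ to be well defined, consistent with the notation introduced after \eqref{eq:lsmodel}), and the bookkeeping of which identity matrix appears where --- in particular, using $\overline{Z}^\top(I_n - P) = 0$ to obtain the zero off-diagonal block. (Note the displayed $I_q$ in the expression for $\widetilde\Sigma^M_{jk}$ should read $I_n$.)
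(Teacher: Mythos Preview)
Your proof is correct and follows essentially the same approach as the paper's: both compute the mean and covariance of $\Shat$ from the facts that $(\htjk,\hbjk)=\overline{Z}^{+}y$ and $\perpjk=-\Psi_{-M}^\top(I_n-P)y$, and both obtain the block-diagonal covariance from the orthogonality $\overline{Z}^\top(I_n-P)=0$. The paper phrases the blockwise argument slightly more abstractly---invoking ``standard OLS theory'' for the first block and ``orthogonality of the OLS coefficients and the residuals'' for the off-diagonal block---whereas you unify everything via the single linear map $A$ and verify each block explicitly; this is a stylistic rather than substantive difference, and your observation that the displayed $I_q$ in $\widetilde\Sigma^M_{jk}$ should be $I_n$ is correct.
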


Note that $\perpjk$ is an ancillary statistic for the parameters $\tjk$ and $\bjk$ in our model. 
A p-value for $\tjk$, when the set of main effects $M$ and the indices $(j,k)$ are fixed, can be obtained from the Gaussian distribution of 
$\left(\htjk, (\hbjk)^\top\right)^\top \mid \perpjk
\ \overset{d}{=}\ 
\left(\htjk, (\hbjk)^\top\right)^\top,
$
where the equality in distribution is due to independence between $\left(\htjk, (\hbjk)^\top\right)^\top$ and $\perpjk$. 

To account for the selection of $M$ using the same data, we adjust the naive distribution of the key statistics by using the conditioning event in Lemma \ref{lem:equiv}. 
In what follows, we derive a valid p-value using the conditional distribution
\begin{align}
\label{eq:condl_law}
	\left(\htjk, (\hbjk)^\top\right)^\top \Big\lvert \left\{\widehat{{\gamma}}\succ {0}_{|M|},\ \widehat{{U}} = U,\ \widehat{{Z}}\ = Z,\ \perpjk =\mathcal{A} \right\}.
\end{align}
After conditioning, $\perpjk$ is no longer an ancillary statistic, and we have the option to either integrate it out or condition on it. 
We do the latter to avoid calculating an integral over $q-q_M$ dimensions.
In the next section, we delve into the derivation of this distribution, which leads to a likelihood in $\left(\tjk,(\bjk)^\top \right)^\top$.

\subsection{Conditional distribution of the key statistics}
\label{sec:conditional}

%In this section, we derive the likelihood based on the conditional distribution in \eqref{eq:condl_law}. 
First, we fix some more notation and identify a mapping that allows us to compute the desired conditional distribution in \eqref{eq:condl_law}.

Let $\operatorname{bd}(\widehat{U}) = \operatorname{diag}(\widehat{u}_j: j \in M) \in \real^{q_M \times |M|}$ denote the block-diagonal matrix with $\widehat{u}_j$ on the diagonal blocks for $j \in M$.
Define
\begin{equation*}
\begin{gathered}
\mathcal{U}_M = \left\{u = \left({u}_j: j \in M \right) \in \real^{q_M}: u_j \in \mathbb{S}^{|g_j| - 1}\right\} \subseteq \real^{q_M},\\
\mathcal{Z}_M = \left\{z = \left( {z}_j: j \notin M \right) \in \real^{q-q_M}: \|z_j\|_2 \leq 1\right\}\subseteq \real^{q-q_M},
\end{gathered}
\end{equation*}
where $\left({v}_j: j \in M \right)$ denotes the stacking of $|M|$ vectors $v_j$, indexed by the set $M$, into one vector. Furthermore, denote by $\mathbb{S}^{k-1} = \{v \in \real^k: ||v||_2 = 1\}$ the unit sphere in $\real^{k}$.
Let $Q_M = \Psi_M^\top \Psi_{M}$ and let $\mathcal{{U}}_{\perp}^{(j)}(\widehat{u}_j) \in \mathbb{R}^{B_j \times (B_j - 1)}$ be an orthogonal basis completion of $\widehat{u}_j \in \mathbb{S}^{B_j - 1}$ for $j \in M$, and 
let
$$
\mathcal{U}_\perp(\widehat{U})=\operatorname{diag}\left(\left({\mathcal{U}}_{\perp}^{(j)}(\widehat{u}_j)\right)_{j \in M}\right),\ 
\Gamma(\widehat\gamma)=\operatorname{diag}\left(\left(\widehat{\gamma}_j I_{B_j-1}\right)_{j \in M}\right),\ 
\Lambda=\operatorname{diag}\left(\left(\lambda_j I_{B_j}\right)_{j \in M}\right).
$$ 

\begin{lem}
\label{lem:cov}
	Define the mapping $\Pi_{\Shat}(\cdot, \cdot, \cdot): \real^{|M|} \times \mathcal{U}_M \times \real^{q-q_M} \to \real^{q}$ by
 $$\Pi_{\Shat}(\gamma, {U}, {Z})= {A}^{M}_{jk}\begin{pmatrix}
			\htjk\\
			\hbjk
		\end{pmatrix} + {B}(U) {\gamma} + {c}(U, Z),$$
 where 
	\begin{align*}
		{A}^M_{jk} = -\Psi^\top \overline{Z}^{M}_{jk},
            \ B({U}) 
            =\left( \Psi^\top\Psi_M + \begin{pmatrix}
                 \epsilon I_{q_M} \\
                 0_{(q-q_M)\times q_M}
            \end{pmatrix} \right) \operatorname{bd}({U}), 
		\ 
		c(U, Z) = 
		\begin{pmatrix}
			{0}_{q_M}\\
			\perpjk
		\end{pmatrix}
			+ 
		\begin{pmatrix}
    		\left(\lambda_j {u}_j\right)_{j \in M} \\
    		\left(\lambda_j {z}_j\right)_{j \notin M}
		\end{pmatrix}.
	\end{align*}
 Consider the estimators $(\widehat{{\gamma}}, \widehat{{U}}, \widehat{{Z}})$ that satisfy the relationship in \eqref{eq:KKT}.
	Then, it holds that
 \begin{align*}
 (\widehat{{\gamma}}, \widehat{{U}}, \widehat{{Z}}) =  \Pi^{-1}_{\Shat}(\omega).
 \end{align*}
\end{lem}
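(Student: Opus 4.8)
The plan is to check that the affine map $\Pi_{\Shat}$ simply re-expresses the stationarity relation \eqref{eq:KKT}, and then to argue that $\Pi_{\Shat}$ is injective on the portion of its domain that contains $(\widehat{\gamma},\widehat{U},\widehat{Z})$, so that writing $\Pi^{-1}_{\Shat}(\omega)$ is legitimate and yields the claimed identity.

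First I would substitute the decomposition \eqref{eq:in-terms-of-A} of $-\Psi^\top y$ into the right-hand side of \eqref{eq:KKT}. By definition the term $-\Psi^\top \overline{Z}^{M}_{jk}\binom{\htjk}{\hbjk}$ is $A^M_{jk}\binom{\htjk}{\hbjk}$, and the leftover $\binom{0_{q_M}}{\perpjk}$ combines with the penalty sub-gradient block $\big((\lambda_j\widehat{u}_j)_{j\in M},(\lambda_j\widehat{z}_j)_{j\notin M}\big)$ to give precisely $c(\widehat{U},\widehat{Z})$. It remains to recognize the middle term of \eqref{eq:KKT}: writing the vector whose $M$-blocks are $\widehat{\gamma}_j\widehat{u}_j$ and whose remaining blocks vanish as $\big(\begin{smallmatrix}I_{q_M}\\0\end{smallmatrix}\big)\operatorname{bd}(\widehat{U})\widehat{\gamma}$ — using $\operatorname{bd}(\widehat{U})\widehat{\gamma}=(\widehat{\gamma}_j\widehat{u}_j)_{j\in M}$ together with the convention that the $M$-groups are stacked on top — and distributing $\Psi^\top\Psi+\epsilon I_q$, the product $\Psi^\top\Psi\big(\begin{smallmatrix}I_{q_M}\\0\end{smallmatrix}\big)$ selects the $M$-columns and equals $\Psi^\top\Psi_M$, while $\epsilon I_q\big(\begin{smallmatrix}I_{q_M}\\0\end{smallmatrix}\big)=\big(\begin{smallmatrix}\epsilon I_{q_M}\\0\end{smallmatrix}\big)$; together these give exactly $B(\widehat{U})\widehat{\gamma}$. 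Hence \eqref{eq:KKT} reads $\omega = A^M_{jk}\binom{\htjk}{\hbjk}+B(\widehat{U})\widehat{\gamma}+c(\widehat{U},\widehat{Z})=\Pi_{\Shat}(\widehat{\gamma},\widehat{U},\widehat{Z})$. Here $\binom{\htjk}{\hbjk}$ and $\perpjk$ depend only on $y$, not on $\omega$, so they act as fixed data inside $\Pi_{\Shat}$.

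To pass to $\Pi^{-1}_{\Shat}$, I would show $\Pi_{\Shat}$ is injective on the region of its domain where $\gamma\succ 0_{|M|}$ and $\|z_j\|_2\le 1$ for all $j\notin M$. Given such a triple $(\gamma,U,Z)$ with $\Pi_{\Shat}(\gamma,U,Z)=\omega^\star$, reconstruct $\beta$ by $\beta_{g_j}=\gamma_ju_j$ for $j\in M$ and $\beta_{g_j}=0$ otherwise, and take the sub-gradient whose blocks are $u_j$ ($j\in M$) and $z_j$ ($j\notin M$); reversing the term-matching above shows that this primal--subgradient pair satisfies the stationarity conditions \eqref{eq:KKTsubgradform} of \eqref{eq:grlasso} at randomization $\omega^\star$. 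The ridge term $\tfrac{\epsilon}{2}\|\cdot\|_2^2$ makes \eqref{eq:grlasso} strictly convex, so its minimizer $\beta$ is unique, and — as already observed just after \eqref{eq:KKTsubgradform} — the sub-gradient achieving equality is then unique as well; thus $\gamma_j=\|\beta_{g_j}\|_2$, $u_j=\beta_{g_j}/\|\beta_{g_j}\|_2$, and $Z$ are all determined by $\omega^\star$, which is injectivity on this region. Since the groups in $M$ are selected we have $\widehat{\gamma}\succ 0_{|M|}$, and $\|\widehat{z}_j\|_2\le 1$ by construction, so $(\widehat{\gamma},\widehat{U},\widehat{Z})$ lies in this region and therefore $\Pi^{-1}_{\Shat}(\omega)=(\widehat{\gamma},\widehat{U},\widehat{Z})$.

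The term-matching of the first step is routine; the only place demanding care is the block/stacking bookkeeping that turns $\Psi^\top\Psi$ restricted to $M$-columns into $\Psi^\top\Psi_M$ and isolates the ridge contribution $\big(\begin{smallmatrix}\epsilon I_{q_M}\\0\end{smallmatrix}\big)$. The main obstacle is really the second step: making precise that a preimage under $\Pi_{\Shat}$ reconstructs a genuine KKT point of \eqref{eq:grlasso}, so that uniqueness of both the solution and its sub-gradient forces injectivity. If one prefers, this invertibility can instead be quoted directly from the change-of-variables construction of \citet{panigrahi2023approximate, huang2023selective}, in which case only the first step is needed.
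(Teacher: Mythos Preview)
Your argument is correct and follows essentially the same route as the paper: substitute \eqref{eq:in-terms-of-A} into \eqref{eq:KKT}, match the three terms to $A^M_{jk}\binom{\htjk}{\hbjk}$, $B(\widehat U)\widehat\gamma$, and $c(\widehat U,\widehat Z)$, and then invert. The paper's proof simply asserts the last step (``Inverting the mapping $\Pi_{\Shat}$ proves our claim''), whereas you supply an explicit injectivity argument via strict convexity and uniqueness of the KKT subgradient; this extra care is sound and, as you note, can alternatively be absorbed into the change-of-variables machinery of \cite{panigrahi2023approximate, huang2023selective}.
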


As will be shown in the proof of Proposition \ref{prop:conditional:density}, Lemma \ref{lem:cov} provides us a tool for characterizing the distribution of the randomized group lasso statistics $(\widehat{{\gamma}},\ \widehat{{U}},\ \widehat{{Z}})$ through a change-of-variables applied to the known distribution of $\omega$.

\begin{prop}
Denote $\phi(x; a, B)$ as the density function of $N(a, B)$ evaluated at $x$. The density function of the conditional distribution of
\begingroup
  % make them very tight just around this one environment
  \setlength{\abovedisplayskip}{2pt}
  \setlength{\belowdisplayskip}{2pt}
  \setlength{\abovedisplayshortskip}{0pt}
  \setlength{\belowdisplayshortskip}{0pt}
$$\left(\Shat,\ \widehat{\gamma},\ \widehat{U},\ \widehat{Z} \right) \ \Big\lvert \ \left\{\widehat{{\gamma}}\succ {0}_{|M|}, \ \widehat{U} \in  \mathcal{U}_M, \ \widehat{Z} \in \mathcal{Z}_M\right\}$$
\endgroup
when evaluated at $(S, \gamma, U, Z)$ is proportional to
\begingroup
  % make them very tight just around this one environment
  \setlength{\abovedisplayskip}{2pt}
  \setlength{\belowdisplayskip}{2pt}
  \setlength{\abovedisplayshortskip}{0pt}
  \setlength{\belowdisplayshortskip}{0pt}
\begin{equation*}
%\label{cond:density:exp}
\begin{aligned}
\phi(S; \nu^{M}_{jk}, \Sigma^{M}_{jk}) \cdot \phi\left(\Pi_{S}({\gamma}, {U}, {Z});{0}_q, \Omega \right) & \cdot \det D_{\Pi_{S}}(\gamma, U, Z)\cdot 1_{\real^{|M|}_{+}}(\gamma) \cdot 1_{\mathcal{U}_M}(U) \cdot 1_{\mathcal{Z}_M}(Z)
\end{aligned}
\end{equation*}
\endgroup
where $D_{\Pi_{S}}(\gamma, U, Z) = \Gamma(\gamma) +  \left({\mathcal{U}}_\perp({U})\right)^\top Q^{-1}_M\Lambda \left({\mathcal{U}}_\perp({U})\right)$.
\label{prop:conditional:density}
\end{prop}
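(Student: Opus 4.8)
\ The plan is to obtain the stated conditional density in three moves: write the joint law of $(\Shat,\omega)$ using independence; push the $\omega$-coordinate through the bijection of Lemma~\ref{lem:cov}; and then restrict to the selection event. For the first move, since $\omega$ is drawn independently of $y$ and $\Shat$ is a linear function of $y$ alone, the pair $(\Shat,\omega)$ has joint density $\phi(S;\nu^M_{jk},\Sigma^M_{jk})\,\phi(w;0_q,\Omega)$ on $\real^{q+1}\times\real^q$, the first factor being the marginal density of $\Shat$ from Lemma~\ref{lem:pre_sel_law} and the second the known Gaussian density of the randomization; here we use, as in Lemma~\ref{lem:pre_sel_law}, that $y\mapsto\Shat$ has full row rank so that $\Shat$ admits a Lebesgue density.

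For the second move, I fix $S$ and regard $\Pi_S$ as a map on the $q$-dimensional manifold $\real^{|M|}\times\mathcal{U}_M\times\real^{q-q_M}$, equipped with the product of Lebesgue measure on the two Euclidean factors and the surface measures on the spheres comprising $\mathcal{U}_M$. By Lemma~\ref{lem:cov}, $\omega\mapsto(\widehat{\gamma},\widehat{U},\widehat{Z})=\Pi_S^{-1}(\omega)$ is a bijection, so the change-of-variables formula applied to the $\omega$-coordinate (with $\Shat=S$ held fixed) expresses the joint density of $(\Shat,\widehat{\gamma},\widehat{U},\widehat{Z})$ at $(S,\gamma,U,Z)$ as $\phi(S;\nu^M_{jk},\Sigma^M_{jk})\,\phi(\Pi_S(\gamma,U,Z);0_q,\Omega)\cdot|\det \mathrm{J}_{\Pi_S}(\gamma,U,Z)|$, where $\mathrm{J}_{\Pi_S}$ is the differential of $\Pi_S$ written in orthonormal frames; the key point is that the tangent space of each sphere factor at $u_j$ should be coordinatized by the orthonormal columns of $\mathcal{U}^{(j)}_\perp(u_j)$ so that the correct surface measure is reproduced. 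Conditioning on $\{\widehat{\gamma}\succ0_{|M|},\ \widehat{U}\in\mathcal{U}_M,\ \widehat{Z}\in\mathcal{Z}_M\}$ then amounts merely to restricting this density to $\gamma\in\real^{|M|}_+$, $U\in\mathcal{U}_M$, $Z\in\mathcal{Z}_M$ and renormalizing, which introduces the three indicator factors; since $\widehat{U}$ and $\widehat{Z}$ always lie in $\mathcal{U}_M$ and $\mathcal{Z}_M$ by construction, the only binding constraint is $\widehat{\gamma}\succ0_{|M|}$, which by Lemma~\ref{lem:equiv} is exactly $\{\widehat{M}=M\}$.

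The third move is to evaluate $|\det \mathrm{J}_{\Pi_S}(\gamma,U,Z)|$ and show it equals $\det D_{\Pi_S}(\gamma,U,Z)$ up to a factor constant in $(\gamma,U,Z)$, which is all that is needed since the claim is only up to proportionality. Differentiating $\Pi_S(\gamma,U,Z)=A^M_{jk}(\htjk,\hbjk)^\top+B(U)\gamma+c(U,Z)$ and ordering the $q$ output coordinates so that the groups in $M$ come first, the columns corresponding to $Z$ are supported only on the rows indexed by $[p]\setminus M$, where they equal the invertible matrix $\operatorname{diag}(\lambda_j I_{B_j})_{j\notin M}$; since no $Z$-column touches the rows in $M$, the Jacobian is block triangular and its determinant factors as $\prod_{j\notin M}\lambda_j^{B_j}$ (a constant) times the $q_M\times q_M$ determinant of $\bigl(\,(\Psi_M^\top\Psi_M+\epsilon I_{q_M})\operatorname{bd}(U)\ \ \ (\Psi_M^\top\Psi_M+\epsilon I_{q_M})\mathcal{U}_\perp(U)\Gamma(\gamma)+\Lambda\mathcal{U}_\perp(U)\,\bigr)$. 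Factoring the constant $\det(\Psi_M^\top\Psi_M+\epsilon I_{q_M})$ out on the left and using that $[\operatorname{bd}(U)\ \mathcal{U}_\perp(U)]$ is orthogonal (each $B_j\times B_j$ block $[u_j\ \mathcal{U}^{(j)}_\perp(u_j)]$ being orthogonal), left-multiplication by its transpose renders the remaining matrix block upper-triangular with diagonal blocks $I_{|M|}$ and $\Gamma(\gamma)+\mathcal{U}_\perp(U)^\top(\Psi_M^\top\Psi_M+\epsilon I_{q_M})^{-1}\Lambda\mathcal{U}_\perp(U)$, which coincides with $D_{\Pi_S}(\gamma,U,Z)$. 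Collecting all constants into the proportionality constant yields the asserted density.

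The step I expect to be the main obstacle is this Jacobian computation: setting up the change of variables correctly on the manifold $\mathcal{U}_M$ --- choosing orthonormal frames for the sphere tangent spaces so that the surface measure is recovered --- and then carrying out the block reduction that collapses a $q\times q$ determinant down to $\det D_{\Pi_S}$. The two structural facts that make the reduction succeed are that the $Z$-columns together with the rows outside $M$ form an invertible corner of the Jacobian, and that $[\operatorname{bd}(U)\ \mathcal{U}_\perp(U)]$ is orthogonal; everything else --- assembling the joint density and restricting to the event --- is routine.
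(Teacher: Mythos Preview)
Your proposal is correct and follows essentially the same route as the paper's proof: factor the joint law of $(\Shat,\omega)$ by independence, push $\omega$ through the bijection $\Pi_S^{-1}$ of Lemma~\ref{lem:cov}, and then restrict to the selection event. The only difference is that the paper delegates the Jacobian computation to Theorem~2 of \cite{panigrahi2023approximate} whereas you work it out explicitly (and in doing so you correctly retain the ridge term, obtaining $(\Psi_M^\top\Psi_M+\epsilon I_{q_M})^{-1}$ where the paper's stated $D_{\Pi_S}$ has $Q_M^{-1}$).
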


We are now ready to derive the likelihood, referred to as the selective likelihood, that accounts for the selection of the main effects during the SPAM fit.
We obtain an expression for this likelihood from the conditional density in Proposition \ref{prop:conditional:density}, by conditioning further on the observed values of $\widehat{U}$, $\widehat{Z}$ and $\perpjk$.
Before we do so, define 
\begingroup
  % make them very tight just around this one environment
  \setlength{\abovedisplayskip}{2pt}
  \setlength{\belowdisplayskip}{2pt}
  \setlength{\abovedisplayshortskip}{0pt}
  \setlength{\belowdisplayshortskip}{0pt}
\begin{align*}
\begin{gathered}
\overline{\Omega}= \left({B}(\widehat{{U}})^\top \Omega^{-1}{B}(\widehat{{U}}) \right)^{-1},
\quad
\overline{A}=-\overline{\Omega} {B}(\widehat{{U}})^\top \Omega^{-1} {A}^{M}_{jk},
\quad 
\overline{b}=-\overline{\Omega} {B}(\widehat{{U}})^\top \Omega^{-1} {c}(\widehat{{U}}, \widehat{{Z}}), \\
\overline{\Theta}=\left((\overline{\Sigma}^{M}_{jk})^{-1}-\overline{A}^{\top}\overline{\Omega}^{-1} \overline{A}+({A}^{M}_{jk})^{\top} \Omega^{-1} {A}^{M}_{jk}\right)^{-1},
\quad
\overline{R}=\overline{\Theta} (\overline{\Sigma}^{M}_{jk})^{-1},\\
\quad 
\overline{s}=\overline{\Theta}\left(\overline{A}^{\top}\overline{\Omega}^{-1} \overline{b}-
({A}^{M}_{jk})^{\top} \Omega^{-1} {c}(\widehat{{U}}, \widehat{{Z}})\right),
\end{gathered}
\end{align*}
\endgroup
%\fromhugo{Have we defined $\Omega$? The current setting considers a special case of $\Omega$, if I understand correctly.} \fromyiling{Thanks! This is fixed now.}
where $A_{jk}^M$, ${B}(\widehat{{U}})$, and ${c}(\widehat{{U}}, \widehat{{Z}})$ are as defined in Lemma \ref{lem:cov}.

Additionally, let
\begingroup
  % make them very tight just around this one environment
  \setlength{\abovedisplayskip}{2pt}
  \setlength{\belowdisplayskip}{2pt}
  \setlength{\abovedisplayshortskip}{0pt}
  \setlength{\belowdisplayshortskip}{0pt}
$$c(\tjk,\bjk) = \int \int_{g'\succ {0}_{|M|}}\phi\left(b'; \overline{R}\begingroup
	\renewcommand*{\arraystretch}{0.8}
	\begin{pmatrix}
    \tjk\\
    \bjk
	\end{pmatrix} 
	\endgroup+ \overline{s}, \overline{\Theta}\right) 
\cdot
\phi\left(g'; \overline{A}b' + \overline{b}, \overline{\Omega}\right)
	\cdot
	\det D_{\Pi_{S}}(g', U, Z) \ dg' \ db'.$$
\endgroup
 
\begin{thm}[Selective log-likelihood]
\label{thm:selective_likelihood}
The log-likelihood of  
$\left(\htjk, (\hbjk)^\top\right)^\top$ conditional on 
\begingroup
  % make them very tight just around this one environment
  \setlength{\abovedisplayskip}{2pt}
  \setlength{\belowdisplayskip}{2pt}
  \setlength{\abovedisplayshortskip}{0pt}
  \setlength{\belowdisplayshortskip}{0pt}
$$\left\{\widehat{{\gamma}}\succ {0}_{|M|},\ \widehat{{U}} = U,\ \widehat{{Z}}\ = Z,\ \perpjk =\mathcal{A} \right\},$$
\endgroup
is equal to
\begingroup
  % make them very tight just around this one environment
  \setlength{\abovedisplayskip}{2pt}
  \setlength{\belowdisplayskip}{2pt}
  \setlength{\abovedisplayshortskip}{0pt}
  \setlength{\belowdisplayshortskip}{0pt}
\begin{equation}
\begin{aligned}
\ell_M(\tjk,\ \bjk; U,Z,\mathcal{A}) 
= -\log c(\tjk,\bjk) + \log\phi\left(\begingroup
	\renewcommand*{\arraystretch}{0.8}
	\begin{pmatrix}
    \htjk\\
    \hbjk
	\end{pmatrix} 
	\endgroup; \overline{R}\begingroup
	\renewcommand*{\arraystretch}{0.8}
	\begin{pmatrix}
    \tjk\\
    \bjk
	\end{pmatrix} 
	\endgroup
+ \overline{s}, \overline{\Theta}\right).
	\label{eq:selectivelikelihood}
\end{aligned}
\end{equation}
\endgroup
\end{thm}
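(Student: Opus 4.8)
\textbf{Proof proposal for Theorem~\ref{thm:selective_likelihood}.}

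The plan is to start from the joint conditional density in Proposition~\ref{prop:conditional:density} and carry out two reductions: first condition on the observed values $\widehat U = U$, $\widehat Z = Z$, and $\perpjk = \mathcal A$, then integrate out the nuisance variable $\widehat\gamma$ (restricted to the positive orthant, which is the only surviving piece of the selection event once $U,Z$ are fixed). Throughout, all factors that depend only on $(U,Z,\mathcal A)$ and not on $(\tjk,\bjk)$ or on the remaining random variables get absorbed into the normalizing constant, so I would track proportionality rather than equality until the very end.

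The first key step is to write the two Gaussian factors $\phi(S;\nu^M_{jk},\Sigma^M_{jk})$ and $\phi(\Pi_S(\gamma,U,Z);0_q,\Omega)$ in terms of $\left((\htjk,(\hbjk)^\top)^\top,\gamma\right)$ after substituting the observed $U,Z,\mathcal A$. Here I would use that $\Pi_S(\gamma,U,Z) = A^M_{jk}(\htjk,(\hbjk)^\top)^\top + B(U)\gamma + c(U,Z)$ is affine in both $(\htjk,(\hbjk)^\top)^\top$ and $\gamma$; also, conditioning on $\perpjk=\mathcal A$ means the $\widetilde\Sigma^M_{jk}$-block of $\Sigma^M_{jk}$ drops out (the key statistics have block-diagonal covariance by Lemma~\ref{lem:pre_sel_law}), leaving only the $\overline\Sigma^M_{jk}$-Gaussian in $(\htjk,(\hbjk)^\top)^\top$. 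The product of these two Gaussian densities is the exponential of a negative-definite quadratic form in the joint vector $\left((\htjk,(\hbjk)^\top)^\top, \gamma\right)$; completing the square in $\gamma$ first produces exactly $\phi(\gamma;\overline A b' + \overline b, \overline\Omega)$ with $b' = (\htjk,(\hbjk)^\top)^\top$, where $\overline\Omega, \overline A, \overline b$ are the quantities defined just before the theorem — this is a routine Schur-complement / conditional-Gaussian computation, and the definitions of $\overline\Omega,\overline A,\overline b$ are precisely what it yields. The residual factor in $b'$, after this completion of the square, is another Gaussian whose precision is $(\overline\Sigma^M_{jk})^{-1} + (A^M_{jk})^\top\Omega^{-1}A^M_{jk} - \overline A^\top\overline\Omega^{-1}\overline A = \overline\Theta^{-1}$; matching its mean to $\overline R\,(\tjk,(\bjk)^\top)^\top + \overline s$ is again just bookkeeping with the stated definitions of $\overline R$ and $\overline s$.

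The second key step is the integration over $\gamma$. Having rewritten the numerator density (up to a constant) as
\[
\phi\!\left(b';\ \overline R\begin{pmatrix}\tjk\\\bjk\end{pmatrix}+\overline s,\ \overline\Theta\right)\cdot \phi\!\left(\gamma;\ \overline A b'+\overline b,\ \overline\Omega\right)\cdot \det D_{\Pi_S}(\gamma,U,Z)\cdot 1_{\real^{|M|}_+}(\gamma),
\]
the conditional density of $b'$ given the selection event is obtained by integrating the above over $\gamma \in \real^{|M|}_+$ and renormalizing. The integral of the last three factors over $\gamma\succ 0$ is exactly $c(\tjk,\bjk)/$\,(the $b'$-Gaussian)\,—wait, more precisely: writing $h(b') = \int_{g'\succ 0}\phi(g';\overline A b'+\overline b,\overline\Omega)\det D_{\Pi_S}(g',U,Z)\,dg'$, the unnormalized conditional density of $b'$ is $\phi(b';\overline R(\tjk,(\bjk)^\top)^\top+\overline s,\overline\Theta)\cdot h(b')$, and $c(\tjk,\bjk) = \int \phi(b';\overline R(\tjk,(\bjk)^\top)^\top+\overline s,\overline\Theta)\,h(b')\,db'$ is precisely its normalizing constant. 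Taking logs gives $\ell_M = \log\phi(b';\overline R(\tjk,(\bjk)^\top)^\top+\overline s,\overline\Theta) + \log h(b') - \log c(\tjk,\bjk)$; since $\log h(b')$ does not involve the parameters $(\tjk,\bjk)$, it is an additive constant in the parameter and can be dropped from the log-likelihood, yielding the claimed expression \eqref{eq:selectivelikelihood}.

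The main obstacle I anticipate is the first step — verifying that the double completion of the square (first in $\gamma$, then in $b'$) reproduces \emph{exactly} the matrices $\overline\Omega,\overline A,\overline b,\overline\Theta,\overline R,\overline s$ as written, including getting the Jacobian factor $\det D_{\Pi_S}$ to ride along correctly and confirming that it depends on $\gamma$ only through $\Gamma(\gamma)$ so that it genuinely cannot be pulled out of the $\gamma$-integral. One subtlety to handle carefully is that $\det D_{\Pi_S}(\gamma,U,Z)$ enters the joint density from Proposition~\ref{prop:conditional:density}, so when we condition on $U,Z$ and integrate out $\gamma$, this determinant must stay inside the integral defining both $h(b')$ and $c(\tjk,\bjk)$; the fact that the same $\det D_{\Pi_S}(g',U,Z)$ appears in the integrand of $c(\tjk,\bjk)$ in the theorem's statement confirms this is the intended treatment. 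A second point worth being explicit about is why conditioning further on $\perpjk = \mathcal A$ is legitimate and merely removes the $\widetilde\Sigma^M_{jk}$-Gaussian: this follows from the block-diagonal structure of $\Sigma^M_{jk}$ in Lemma~\ref{lem:pre_sel_law} together with the fact that $\perpjk$ enters $\Pi_S$ only through the fixed vector $c(U,Z)$, so once $\mathcal A$ is plugged in, no $b'$- or $\gamma$-dependence on $\perpjk$ remains.
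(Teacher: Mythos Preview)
Your proposal is correct and follows essentially the same route as the paper: the paper likewise starts from Proposition~\ref{prop:conditional:density}, conditions on $\perpjk=\mathcal A$ (using the block-diagonal covariance from Lemma~\ref{lem:pre_sel_law}) and on $\widehat U=U,\widehat Z=Z$, then invokes a separate algebraic lemma (Lemma~\ref{lem:exact_likelihood}) that is exactly your double completion of the square to pass to the $(\overline R,\overline s,\overline\Theta,\overline A,\overline b,\overline\Omega)$-parameterization, and finally marginalizes over $\widehat\gamma\succ 0$ and drops the parameter-free integral $\log h(b')$.
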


Theorem \ref{thm:selective_likelihood} provides us with an exact likelihood that accounts for the SPAM fit on the given data. 
This likelihood yields valid p-values for the hypothesis $H_0: \tjk=0$ for the pair of interactions between $X_j$ and $X_k$, where $(j,k) \in \mathcal{T}^{M}$.

Additionally, we note that this likelihood readily yields valid inference on the main effects $\hat\beta_{jk}^M$, although this is not our focus in this paper.
In the next section, we present a test statistic based on the maximum likelihood estimator for $\tjk$ in this likelihood and offer a computationally efficient algorithm for calculating a p-value for our hypothesis.

\subsection{A Laplace approximation for the selective likelihood}
\label{sec:laplace}

Following the line of recent research of approximate selective inference \citep{panigrahi2017mcmc, huang2023selective, panigrahi2023approximate}, we adopt a Laplace approximation to the normalizing constant $c(\tjk,\bjk)$, where 
$$c(\tjk,\bjk) \approx \sup_{g'\succ {0}_{|M|}, b'}\phi\left(b'; \overline{R}\begingroup
	\renewcommand*{\arraystretch}{0.8}
	\begin{pmatrix}
    \tjk\\
    \bjk
	\end{pmatrix} 
	\endgroup+ \overline{s}, \overline{\Theta}\right) 
\cdot
\phi\left(g'; \overline{A}b' + \overline{b}, \overline{\Omega}\right)
	\cdot
	\det D_{\Pi_{S}}(g', U, Z).$$
We refer the readers to \citep{huang2023selective} for formal probabilistic justifications for this approximation. 

In practice, we handle the constrained optimization through the addition to the objective of a {\em barrier function}, $$\text{Barr}(g)=\sum_{k=1}^{|M|} \log(1+1/(g_k)),$$ 
which imposes an increasingly severe penalty as the variable $g$ approaches the boundary of the positive orthant constraint.
In particular, we approximate $c(\tjk,\bjk)$ with $\widehat{c}(\tjk,\bjk)$ where
\begin{equation*}
\begin{aligned}
\log \widehat{c}(\tjk,\bjk) = & -\inf_{b', g'}
		\bigg\{ \frac{1}{2}\left(b' - \overline{R}
	\begingroup
	\renewcommand*{\arraystretch}{0.8}
	\begin{pmatrix}
    \tjk\\
    \bjk
	\end{pmatrix} 
	\endgroup
- \overline{s}\right)^\top\overline{\Theta}^{-1}\left( b' - \overline{R}\begingroup
	\renewcommand*{\arraystretch}{0.8}
	\begin{pmatrix}
    \tjk\\
    \bjk
	\end{pmatrix} 
	\endgroup 
	- \overline{s}\right)\\
		&\quad\quad\ + \frac{1}{2}\left(g' - \overline{A}b' - \overline{b}\right)^\top\overline{\Omega}^{-1}\left(g' - \overline{A} b' - \overline{b}\right)  - \log \det D_{\Pi_{S}}(g', U, Z)+{\text{\normalfont Barr}}( g')\bigg\}.
\end{aligned}
\end{equation*}

Dropping constants that do not depend on the parameters, we obtain our approximate selective log-likelihood:
\begin{equation}
\begin{aligned}
&\log \widehat{\mathcal{L}}_M(\tjk,\ \bjk; U,Z,s^M_{jk}) 
= \log \phi\left(\begingroup
	\renewcommand*{\arraystretch}{0.8}
	\begin{pmatrix}
    \htjk\\
    \hbjk
	\end{pmatrix} 
	\endgroup; \overline{R}
	\begingroup
	\renewcommand*{\arraystretch}{0.8}
	\begin{pmatrix}
    \tjk\\
    \bjk
	\end{pmatrix} 
	\endgroup
+ \overline{s}, \overline{\Theta}\right) - \log \widehat{c}(\tjk,\bjk).
	\label{eq:approxloglikelihood}
\end{aligned}
\end{equation}

\subsection{Maximum likelihood inference using the selective likelihood}
\label{sec:approxmle}

We now present a result that gives an expression for the MLE $(\widehat{\theta}^{jk}_{mle}, \widehat{\beta}^{jk}_{mle})$ of $(\tjk, \bjk)$, using the approximate selective log-likelihood given in \eqref{eq:approxloglikelihood}. The expressions of the MLE and the corresponding observed Fisher information matrix are obtained as the solution to a low $|M|$-dimensional convex optimization problem. 
Using these expressions, we conveniently construct p-values based on a Wald-type test statistic (and Wald-type confidence intervals) for inference on the interaction effects.

\begin{thm}
\label{MLE:EstimatingEqn}
Consider solving the $|M|$-dimensional optimization problem
$g^*(\htjk, \hbjk) = $
	\begin{equation}
 	\underset{g}{\arg\min}\ 
	\frac{1}{2}\left(g - \overline A
	\begingroup
	\renewcommand*{\arraystretch}{0.8}
	\begin{pmatrix}
    \htjk\\
    \hbjk
	\end{pmatrix} 
	\endgroup 
	- \overline b\right)^\top 
	\overline\Omega^{-1} 
	\left(g - \overline A
	\begingroup
	\renewcommand*{\arraystretch}{0.8}
	\begin{pmatrix}
    \htjk\\
    \hbjk
	\end{pmatrix} 
	\endgroup - \overline b\right)
	 - \log \det D_{\Pi_{S}}(g, U, Z)+{\text{\normalfont Barr}}( g).
\label{opt:esteqn}
\end{equation}
Then, the joint maximum likelihood estimator for $\begingroup
	\renewcommand*{\arraystretch}{0.8}
	\begin{pmatrix}
    \tjk\\
    \bjk
	\end{pmatrix} 
	\endgroup$ is obtained as
\begin{eqnarray*}
	\begingroup
	\renewcommand*{\arraystretch}{0.8}
	\begin{pmatrix}
    \widehat{\theta}^{jk}_{mle}\\
    \widehat{\beta}^{jk}_{mle}
	\end{pmatrix} 
	\endgroup 
	= 
	\overline{R}^{-1}
	\begingroup
	\renewcommand*{\arraystretch}{0.8}
	\begin{pmatrix}
    \htjk\\
    \hbjk
	\end{pmatrix} 
	\endgroup  
	- \overline{R}^{-1}\overline{s} + \overline{\Sigma}^{M}_{jk} \overline A^\top \overline\Omega^{-1}\left(\overline A\begingroup
	\renewcommand*{\arraystretch}{0.8}
	\begin{pmatrix}
    \htjk\\
    \hbjk
	\end{pmatrix} 
	\endgroup+ \overline b- g^*(\htjk, \hbjk)\right).
\end{eqnarray*}
%\end{theorem}
%
%\begin{theorem}
%\label{MLE:FInfo}
%	Consider solving the optimization problem in \eqref{opt:esteqn}.
	Let the $(|M|+1)\times(|M|+1)$ matrix
$	
K = $
$$\overline \Theta ^{-1} + \overline A^\top\overline\Omega^{-1} \overline A
	- \overline A^\top\overline\Omega^{-1} \big[\overline\Omega^{-1} 
	-\nabla^2 \log \det D_{\Pi_{S}}(g^*(\htjk, \hbjk), U, Z) 
	+ \nabla^2{\text{\normalfont Barr}}(g^*(\htjk, \hbjk))\big] ^{-1}\overline\Omega^{-1} \overline A.
$$
	The observed Fisher information matrix based on the approximate selective log-likelihood in \eqref{eq:approxloglikelihood} is equal to
$%	\begin{align*}
		I_{\text{mle}} = (\overline{\Sigma}^{M}_{jk})^{-1}K^{-1}(\overline{\Sigma}^{M}_{jk})^{-1}.
$%\end{align*}
\end{thm}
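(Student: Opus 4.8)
The plan is to work directly with the approximate selective log-likelihood \eqref{eq:approxloglikelihood}, written as $\log\widehat{\mathcal{L}}_M(\tjk,\bjk)=\log\phi\big(v;\mu,\overline\Theta\big)-\log\widehat c(\tjk,\bjk)$, where $v=(\htjk,(\hbjk)^\top)^\top$ is the least-squares statistic and $\mu=\overline R\,(\tjk,(\bjk)^\top)^\top+\overline s$. The structural fact that drives everything is that $-\log\widehat c$ is the optimal value of the strictly convex inner program over $(b',g')$ appearing in its definition: the two quadratic forms have positive definite Hessians $\overline\Theta^{-1}$ and $\overline\Omega^{-1}$, the map $g'\mapsto-\log\det D_{\Pi_{S}}(g',U,Z)$ is convex because $D_{\Pi_{S}}(\cdot,U,Z)$ is affine and positive definite in $g'$, and $\mathrm{Barr}$ is convex and diverges at $\partial\real^{|M|}_{+}$. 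Hence for every $(\tjk,\bjk)$ the inner problem has a unique interior minimizer $(b^\star,g^\star)$, depending smoothly on $(\tjk,\bjk)$ by the implicit function theorem, so the envelope (Danskin) theorem applies to the value function.

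First I would compute the score. Since $-\log\widehat c$ depends on $(\tjk,\bjk)$ only through $\mu$, and the only $\mu$-dependent term of the inner objective is $\tfrac12(b'-\mu)^\top\overline\Theta^{-1}(b'-\mu)$, the envelope theorem gives $\nabla_{(\tjk,\bjk)}\big(-\log\widehat c\big)=-\overline R^\top\overline\Theta^{-1}(b^\star-\mu)$, while $\nabla_{(\tjk,\bjk)}\log\phi(v;\mu,\overline\Theta)=\overline R^\top\overline\Theta^{-1}(v-\mu)$. Adding these, the score is $\overline R^\top\overline\Theta^{-1}(v-b^\star)$; since $\overline R$ is invertible and $\overline\Theta\succ0$, the stationarity equation is equivalent to $b^\star=v$. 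Plugging $b'=v$ into the two stationarity blocks of the inner program, the $g'$-block becomes $\overline\Omega^{-1}(g'-\overline A v-\overline b)-\nabla\log\det D_{\Pi_{S}}(g',U,Z)+\nabla\mathrm{Barr}(g')=0$, which is exactly the optimality condition of \eqref{opt:esteqn}, so by strict convexity $g^\star=g^*(\htjk,\hbjk)$; and the $b'$-block reads $\overline\Theta^{-1}(v-\mu)=\overline A^\top\overline\Omega^{-1}(g^\star-\overline A v-\overline b)$. Solving this for $\mu$, substituting $\mu=\overline R\,(\widehat{\theta}^{jk}_{mle},(\widehat{\beta}^{jk}_{mle})^\top)^\top+\overline s$, and simplifying with $\overline R^{-1}\overline\Theta=\overline{\Sigma}^{M}_{jk}$ (which follows from $\overline R=\overline\Theta(\overline{\Sigma}^{M}_{jk})^{-1}$) yields the stated closed form for $(\widehat{\theta}^{jk}_{mle},\widehat{\beta}^{jk}_{mle})$.

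Next I would compute the observed information $-\nabla^2_{(\tjk,\bjk)}\log\widehat{\mathcal{L}}_M$ at the MLE, where $b^\star=v$ and $g^\star=g^*(\htjk,\hbjk)$. The Gaussian part contributes $\overline R^\top\overline\Theta^{-1}\overline R$. For $-\log\widehat c$, the second-order envelope theorem reduces the Hessian of the value function (in $\mu$) to $\overline\Theta^{-1}-\overline\Theta^{-1}\,\partial_\mu b^\star$, so it remains to find $\partial_\mu b^\star$. Differentiating the two inner stationarity equations in $\mu$ and eliminating $\partial_\mu g^\star$ via
\[
\partial_\mu g^\star=\big[\overline\Omega^{-1}-\nabla^2\log\det D_{\Pi_{S}}(g^\star,U,Z)+\nabla^2\mathrm{Barr}(g^\star)\big]^{-1}\overline\Omega^{-1}\overline A\,\partial_\mu b^\star
\]
produces an equation of the form $(\cdots)\,\partial_\mu b^\star=\overline\Theta^{-1}$ whose left-hand factor is exactly the matrix $K$ of the theorem, so $\partial_\mu b^\star=K^{-1}\overline\Theta^{-1}$. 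Pushing both Hessian contributions through the affine map $\mu\mapsto(\tjk,\bjk)$, the bare $\overline\Theta^{-1}$ terms cancel and $-\nabla^2\log\widehat{\mathcal{L}}_M=\overline R^\top\overline\Theta^{-1}K^{-1}\overline\Theta^{-1}\overline R$. Finally $\overline\Theta^{-1}\overline R=(\overline{\Sigma}^{M}_{jk})^{-1}$, and by symmetry $\overline R^\top\overline\Theta^{-1}=(\overline{\Sigma}^{M}_{jk})^{-1}$, so $I_{\mathrm{mle}}=(\overline{\Sigma}^{M}_{jk})^{-1}K^{-1}(\overline{\Sigma}^{M}_{jk})^{-1}$. (Since the bracketed matrix above is $\succeq\overline\Omega^{-1}$, one also gets $K\succeq\overline\Theta^{-1}\succ0$, which shows $\log\widehat{\mathcal{L}}_M$ is strictly concave, so the stationary point found above is its unique maximizer.)

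The hard part will be the second-order envelope step: rigorously justifying differentiability of the inner value function -- the interior-minimizer guarantee provided by the barrier is exactly what makes this go through -- and carrying the coupled implicit differentiation of the $(b',g')$ stationarity system cleanly, so that the resulting Schur-complement-type correction collapses into precisely the matrix $K$ of the theorem. By contrast, the first-order computation and the remaining matrix algebra are routine once the identity $\overline R=\overline\Theta(\overline{\Sigma}^{M}_{jk})^{-1}$ is exploited.
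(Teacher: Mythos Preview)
Your proposal is correct and matches the approach the paper defers to: the paper does not actually give its own proof of this theorem, but simply states that it is ``similar to that of Theorem 4.1 in \cite{huang2023selective} and is omitted for brevity.'' Your argument---applying the envelope theorem to the inner optimization in $\log\widehat c$, reading off $b^\star=v$ from the score equation, identifying the resulting $g'$-stationarity with \eqref{opt:esteqn}, and then implicitly differentiating the coupled $(b',g')$ system to obtain the Schur-complement form $K$---is precisely the route taken in that reference, so there is nothing to add or compare.
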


\begin{proof}
    The proof of this result is similar to that of Theorem 4.1 in \cite{huang2023selective} and is omitted for brevity.
\end{proof}

Using the expression for the MLE and the observed Fisher information matrix from Theorem \ref{MLE:EstimatingEqn}, we approximate the distribution of $\htjk$ using $N\left(\tjk, \sqrt{{I^{-1}_{\text{mle}}}[|M|+1,|M|+1]}\right)$, where ${I^{-1}_{\text{mle}}}[|M|+1,|M|+1]$ denotes the $((|M|+1), (|M|+1))$ entry in $I^{-1}_{\text{mle}}$. 
This leads to an approximate $\operatorname{Uniform}(0,1)$ pivot 
$\Phi(z_{jk})$, where $$z_{jk} = (\htjk - \tjk) / \sqrt{{I^{-1}_{\text{mle}}}[|M|+1,|M|+1]},$$
and an approximate $(1-\alpha)$-level confidence interval for the parameter $\tjk$, given by
$$
\widehat{\theta}_{mle}^{M} \pm z_{1-\alpha/2}\sqrt{{I^{-1}_{\text{mle}}}[|M|+1,|M|+1]}.
$$
%To interpret the confidence interval, when the interval covers 0, we conclude there is not evidence that the interaction term $\interjk$ contains information  unexplained after fitting the main effects $\Psi_E$, and we decide not to add the $(j,k)^{\text{th}}$ interaction into the SPAM model at the $1-\alpha$ level, and vice versa.

%\input{simulation.tex}
\setlength{\abovedisplayskip}{2pt}   

% Space below a full-width display when the following line is “normal” text.
\setlength{\belowdisplayskip}{2pt}   

% If the line before/after the display is short (e.g. end of a paragraph),
% these kicks in:
\setlength{\abovedisplayshortskip}{0pt}
\setlength{\belowdisplayshortskip}{0pt}

\section{Simulation}
\label{sec: sims}
First, we describe the simulation settings explored in this section, along with the metrics used to evaluate our method. We then present the key findings from these simulations.

\subsection{Simulation settings}
\label{subsec:sim_setting}

We simulate $n=200$ independent observations of $p=20$ features. For each observation $i$, the features are sampled as follows: we first draw $Z_{i,1}, ..., Z_{i,20}\sim N_{20}\left(0_{20}, \Sigma(\rho_1, \rho_2, \rho_{cross})\right)$, where
$$\Sigma(\rho_1, \rho_2, \rho_{cross}) = 
\begin{pmatrix}
    (1-\rho_1)I_3 + \rho_1 1_{3} 1_{3}^\top & \rho_{cross}1_{3} 1_{17}^\top\\
    \rho_{cross}1_{17} 1_{3}^\top  & (1-\rho_2)I_{17} + \rho_2 1_{17} 1_{17}^\top
\end{pmatrix}.
$$
We then transform the samples as $X_{i,j} = 2.5\cdot\Phi^{-1}(Z_{i,j})$ for $j=1,2,3$, and $X_{i, j} = \Phi^{-1}(Z_{i, j})$ for $j=4, ..., 20$, thereby generating features with varying effect strengths.
As a result,  marginally, $X_{i,1}, X_{i,2}, X_{i,3}\sim\mathrm{Uniform}(0,2.5)$, and $X_{i,4}, ..., X_{i,20}\sim \mathrm{Uniform}(0,1)$, while their joint distribution depends on the matrix $\Sigma(\rho_1, \rho_2, \rho_{cross})$ through the parameters $\rho_1$, $\rho_2$ and $\rho_{cross}$.  In particular, $\rho_1$ controls the dependence among signal features, $\rho_2$ controls the dependence among noise features, and $\rho_{cross}$ controls the dependence between these two groups of features. 

Next, we randomly sample a subset $\mathcal{G}$ of interaction pairs from the set $\{(j,k): j \neq k; \ j, k \in [20]\}$, with $|\mathcal{G}| = s_{inter}$. We then construct the mean response vector $\mu \in \mathbb{R}^n$, where the $i$-th entry is given by:
\begin{equation}
    \mu_i = \gamma_{main}\left[ 2\sin(2X_{i,1}) + X_{i,2}^2 + \exp(-X_{i,3}) \right] + \gamma_{inter}\sum_{(j,k) \in \mathcal{G}} X_{i,j} X_{i,k},
    \label{eq:sim_model}
\end{equation} 
and $\gamma_{main}, \gamma_{inter} \in \mathbb{R}$ are constants that determine the relative strength of the main and interaction signals. 
Finally, the response vector $Y \in \mathbb{R}^n$ is generated as: $Y = \mu + \epsilon$, where $\epsilon \sim N_n(0_n, \sigma^2 I_n)$.

We construct four main simulation settings, which we refer to as Settings 1--4.
\begin{enumerate}[leftmargin=0pt, labelsep=1em]
\item Setting 1: Varying noise level. \  We fix $s_{inter} = 5$, $\gamma_{main} = 2$, $\gamma_{inter} = 2$, $\rho_1 = \rho_2 = 0.6$, and $\rho_{cross} = 0.48$, and vary the standard deviation of the noise $\epsilon$ by setting $\sigma \in \{0.5, 1, 2, 4\}$.
\item Setting 2: Varying cross-correlation. \ We fix $s_{inter} = 5$, $\gamma_{main} = 2$, $\gamma_{inter} = 2$, $\rho_1 = \rho_2 = 0.6$, and $\sigma=2$, and vary the cross-correlation parameter in $\Sigma(\rho_1, \rho_2, \rho_{cross})$ by setting $\rho_{cross} \in \{0, 0.2, 0.4, 0.6\}$. 
\item Setting 3: Varying strength of interactions.\  We fix $s_{inter} = 5$, $\gamma_{main} = 2$, $\rho_1 = \rho_2 = 0.6$, $\rho_{cross} =0.48$, and $\sigma=2$, and vary $\gamma_{inter} \in \{0.5, 1, 2, 5\}$.
\item Setting 4: Varying the number of interactions.\  We fix $\gamma_{main} = 2$, $\gamma_{inter} = 2$, $\rho_1 = \rho_2 = 0.6$, $\rho_{cross} =0.48$, and $\sigma=2$, and vary $s_{inter} \in \{5, 10, 15, 20\}$.
\end{enumerate}
Furthermore, the motivating example in Section \ref{subsec:motivating_eg} is simulated using $s_{inter} = 5$, $\gamma_{main} = 2$, $\gamma_{inter} = 2$, $\rho_1 = \rho_2 = 0.6$, $\rho_{cross} = 0.48$, and $\sigma=2$.

\subsection{Methods and metrics}
\label{subsec:sim_methods}

We compare our proposed method with two alternative approaches: the naive approach and the data-splitting approach. 
All three methods for obtaining p-values for the interaction effects are described below. 
To model the non-linear main effects in all three methods, we construct a B-spline basis expansion $\Psi_j$ for each feature $X_j$, using degree 2 and 6 knots. 
This produces a design matrix $\Psi$ with dimensions $200 \times 40$. 
To allow for greater flexibility, one can increase the degree of the B-spline basis expansion. As demonstrated in our simulations results, we observe substantial advantages over the naive and the data-splitting approaches, even when using degree 2.

\noindent \textbf{Naive approach}. \ The first method we consider is a ``naive'' approach, where the entire observed dataset is used to obtain $\widehat{M}=M$ by solving \eqref{eq:grlasso} with $\omega = 0_q$, i.e., without randomization. 
Inference for $\tjk$ under the model \eqref{eq:postmodel} is conducted via the standard $z$-test. 
Specifically, we ignore the randomness in $\widehat{M}$ and the fact that it is estimated from the same data, proceeding to make inferences for $\tjk$ for each $(j, k) \in \mathcal{T}^{M}$ as if the realization $M$ were a fixed, predetermined set.
The corresponding Gaussian pivot for each $(j,k) \in \mathcal{T}^{M}$ is given by  $z^{naive}_{jk} = (\htjk - \tjk) / (\sigma_{jk})$, where $\sigma_{jk} =  \sigma\sqrt{\left((Z^{M+}_{jk})^\top Z^{M+}_{jk}\right)^{-1}_{1,1}}$. Note that if $M$ were a fixed set, then $z^{naive}_{jk} \sim N(0,1)$, and $\Phi(z^{naive}_{jk}) \sim \operatorname{Uniform}(0,1)$, where $\Phi(\cdot)$ denotes the cumulative distribution function (CDF) of a standard normal variable.

\noindent \textbf{Data splitting approach}. \ The second method we consider is the data splitting method, where we randomly partition the data into two subsamples: $[n] = \mathcal{S}_{sel} \cup \mathcal{S}_{inf}$, where $\mathcal{S}_{sel} \cap \mathcal{S}_{inf} = \emptyset$, with $\mathcal{S}_{sel}$ reserved for the selection of $\widehat{M}$ and $|\mathcal{S}_{sel}| = n_1 = \lfloor r*n \rfloor$, and $\mathcal{S}_{inf}$, with $|\mathcal{S}_{inf}| = n_2 = n - n_1$, reserved for inference of $\tjk$ for each $(j, k) \in \mathcal{T}^{M}$, after observing $\widehat{M}=M$. 
Specifically, denote $y^{sel}$ and $\interjk^{sel}$ as the entries of $y$ and $\interjk$ with indices in $\mathcal{S}_{sel}$, and $\Psi^{sel}$ as the submatrix of $\Psi$ with rows indexed by $\mathcal{S}_{sel}$. We define $y^{inf}$, $\Psi^{inf}$, and $\interjk^{inf}$ analogously. We then estimate $\widehat{M}$ by solving \eqref{eq:grlasso} with $\omega = 0_q$, replacing $y$ and $\Psi_j$ with $y^{sel}$ and $\Psi_j^{sel}$, respectively. Inference for $\tjk$ is performed using:
\begin{equation*}
        y^{inf} \sim N(\Psi_{M}^{inf} \bjk + \interjk^{inf} \tjk, \sigma^2 I_{n_2}).
\end{equation*}
The same procedure as in the naive approach is used to compute the pivots for the interaction effects, except that they are computed solely on $y^{inf}$, the data reserved for inference.

\noindent \textbf{The proposed approach}. \ We compute $M$ by solving \eqref{eq:grlasso}, and then calculate $\htjk$ according to \eqref{eq:lsmodel}. 
The corresponding approximate selective Gaussian pivot for each $(j,k) \in \mathcal{T}^{M}$ is given by $z^{mle}_{jk} = (\widehat{\theta}_{mle}^{M} - \tjk) / (\sigma_{jk})$, where $\sigma_{jk} = \sqrt{{I^{-1}_{\text{mle}}}[|M|+1,|M|+1]}$. Here, $\widehat{\theta}_{mle}^{M}$ is the selective MLE for $\tjk$ from Section~\ref{sec:approxmle} (with the subscript $jk$ suppressed), and ${I^{-1}_{\text{mle}}}[|M|+1,|M|+1]$ is its estimated variance.

In our method, we set the randomization covariance $\Omega = \sigma^2\frac{1-r}{r} \Psi^\top \Psi$, for some $r \in (0,1)$. 
As established in previous work \cite{panigrahi2021integrative, huang2023selective}, this choice of $\Omega$ leads to an asymptotically equivalent solution to data splitting, where $(100 \times r)\%$ of the data is used for selection, provided the main effects regression model is correctly specified. 
This equivalence provides a natural choice for setting $\Omega$, even when the main effects model is misspecified, i.e., when additional omitted interaction terms are present in the true data generation process.

To achieve main effects selection similar to naive inference, the proposed method is implemented using $\Omega = \sigma^2\frac{1-r}{r} \Psi^\top \Psi$ with $r = 0.9$, and data splitting is also performed using $90\%$ of the data for selection. 
For all methods, we fix the tuning parameter $\lambda_j = 0.5 \sigma \sqrt{n} \sqrt{B_j} \sqrt{2 \log(q)}$ in \eqref{eq:grlasso}, where $B_j = 2$ is the group size for the $j$-th variable. Furthermore, whenever the dispersion parameter $\sigma$ is not known, we replace it with the plug-in estimate $\widehat{\sigma} = \sqrt{||y - \Psi(\Psi^\top \Psi)^{-1}\Psi^\top y||_2^2 / (n-q)}$. Finally, for all methods, we construct $\mathcal{T}^{M}$ using the weak hierarchy rule, where $\mathcal{T}^{M} = \left\{(j,k) \in [p]^2: j \in M ~\mathrm{or}~k \in M \right\}$.

To assess the performance of our method and compare it with the other two approaches, we compute the following metrics:
\begin{enumerate}[leftmargin=0pt, labelsep=1em]
\item[(i)] Empirical cumulative distribution function (ECDF) of pivots. \ As noted before, if the inferential method is valid, with corresponding Type-I error rate control, then the ECDF of the pivots should approximately follow a $\operatorname{Uniform}(0,1)$ distribution. 
Therefore, for each inferential method, we compute pivots for the interaction effects, for each $(j,k) \in \mathcal{T}^M$, and plot the ECDF of these pivots, aggregated across all the interaction effects in   $\mathcal{T}^M$ and all simulation runs.
\item[(ii)] Lengths of confidence intervals. \ A common way to assess inferential power is by examining the lengths of the confidence intervals. For confidence intervals $\text{CI}(\tjk) = (L^{M}_{jk},\ U^{M}_{jk})$ corresponding to each $\tjk$, for $(j, k) \in \mathcal{T}^M$, we compute the average confidence interval length within each simulation, defined as
$$
    \text{Average Length} = \frac{\sum_{(j,k) \in \mathcal{T}^M} \left( U^{M}_{jk} - L^{M}_{jk} \right)}{|\mathcal{T}^M|},
$$
where the average is taken over all $(j,k) \in \mathcal{T}^M$. 
\item[(iii)] F1-Score evaluation of interaction effect recovery. \ Suppose we obtain a collection of p-values from each method, denoted by $\mathcal{P}^M = \left\{p_{jk}: (j,k) \in \mathcal{T}^M \right\}$. 
We define the test F1 score as
$$
\mathrm{F}1 \ \text{Score}(\mathcal{P}^M )=2\times \frac{\text {Precision}(\mathcal{P}^M) \times \text {Recall}(\mathcal{P}^M )}{\text {Precision}(\mathcal{P}^M)+ \text{Recall}(\mathcal{P}^M)},
$$
where the set of true discoveries consists of interaction effects whose magnitudes are greater than or equal to a predefined threshold $t_0 > 0$.
More specifically, to compute the F1 score, we calculate:
$$
\text {Precision}(\mathcal{P}^M)=\frac{\left|\left\{(j, k) \in \mathcal{I}^M: |\tjk| \geq t_0 \right\} \cap \left\{(j, k) \in \mathcal{I}^M: p_{jk} < \alpha \right\}\right|}{\left|\left\{(j, k) \in \mathcal{I}^M: p_{jk} < \alpha \right\}\right|};
$$
$$
\text {Recall}(\mathcal{P}^M)=\frac{\left|\left\{(j, k) \in \mathcal{I}^M: |\tjk| \geq t_0 \right\} \cap \left\{(j, k) \in \mathcal{I}^M: p_{jk} < \alpha \right\}\right|}{\left| \left\{(j, k) \in \mathcal{I}^M: |\tjk| \geq t_0 \right\} \right|}.
$$
In our simulations, we set $t_0=0.1$.
\end{enumerate}

\subsection{Results}

From Figure~\ref{fig:ECDF_sim}, we observe the following trends: 
(1) The proposed method produces valid pivots across all settings. Note that their empirical CDF closely aligns with the $\operatorname{Uniform}(0,1)$ distribution.
(2) The data splitting approach also yields valid pivots under most scenarios, similar to our proposed method. 
However, this approach fails to deliver valid inference in certain settings, such as Setting 3 with larger interaction effects or Setting 4 with a larger number of interaction effects. 
This breakdown in inferential validity is due to either rank deficiency in the design matrix or an insufficient sample size in the holdout data reserved for inference, highlighting situations where data splitting may not be a feasible strategy for selective inference.
(3) The naive approach consistently fails to produce valid inference across most settings. 
As expected, the empirical CDF of its pivots increasingly deviates from the uniform distribution as the noise level rises or as the collinearity between features increases, when selection bias due to main effect selection is much larger.

From Figure~\ref{fig:Len_sim}, we observe that the confidence intervals produced by our method are consistently shorter than those from the data splitting approach, which yields much wider intervals due to relying solely on holdout data for inference. 
In scenarios where the design matrix is rank-deficient or the holdout sample size is insufficient for reliable inference, the data-splitting method either yields extremely wide confidence intervals or fails to produce meaningful results. 
This issue can be noted for Settings 3 and 4, where the number of interaction effects is large or the magnitude of these effects is substantial.
Furthermore, our intervals are only slightly longer than those from the naive method, which does not account for the selection of the main effects. 
This highlights that our method provides valid interval estimates for the interaction effects without incurring a substantial increase in interval length compared to the invalid naive approach.

Finally, from Figure~\ref{fig:F1_sim}, we observe the following trends:
(1) The proposed method consistently achieves better or comparable F1-scores relative to both naive inference and data splitting, across the different settings, highlighting its strength in balancing false discovery rate (FDR) and power.
(2) The data-splitting method consistently struggles with insufficiency of data when testing null hypotheses after the selection of main effects. As noted in earlier plots, this results in situations where inference cannot even be computed, in which case a F1-score of 0 is assigned. The interactions identified by the data-splitting method result in models with significantly lower F1-scores in these settings.
(3) The only scenario where the proposed method yields a lower F1-score than naive inference is the special case of Setting 2 with $\rho = 0$, where all features are pairwise independent. 
In this case, naive inference produces valid pivots without requiring any selection adjustment. In all other scenarios, however, the proposed method outperforms naive inference, due its ability to provide valid inference while reusing the full data for inference.

\begin{figure}
    \centering
    \includegraphics[width=0.95\linewidth]{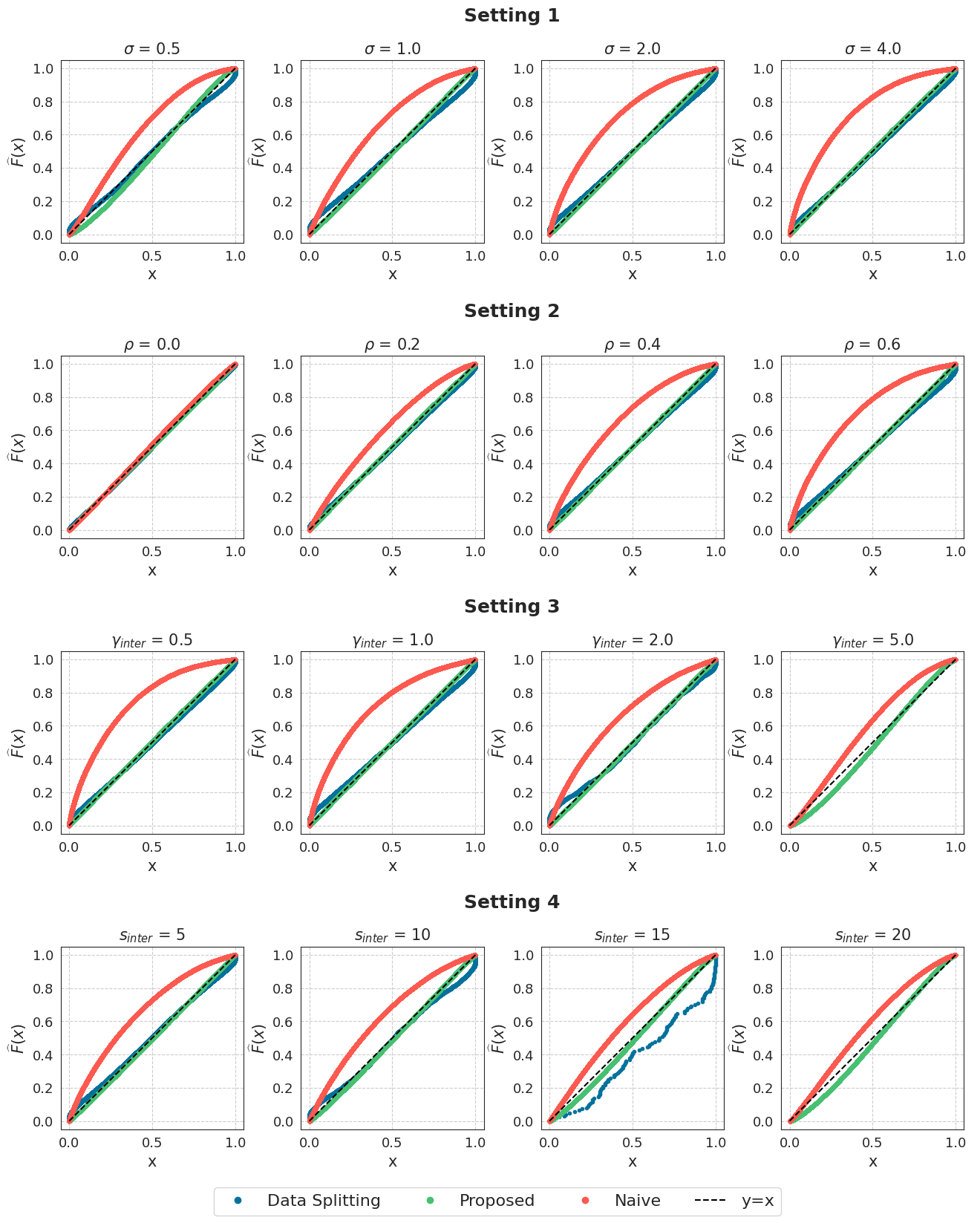}
    \caption{ECDF of the uniform pivots $\Phi(z_{jk})$ obtained by naive inference, data splitting, and the proposed method in all simulation settings}
    \label{fig:ECDF_sim}
\end{figure}

\begin{figure}
    \centering
    \includegraphics[width=\linewidth]{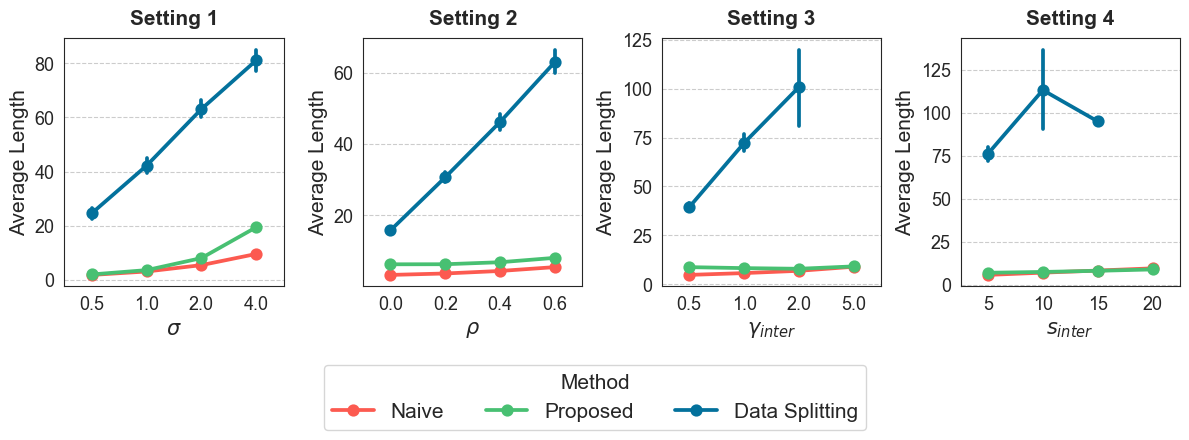}
    \caption{Average confidence interval lengths for $\tjk$ obtained by naive inference, data splitting, and the proposed method in all simulation settings}
    \label{fig:Len_sim}
\end{figure}

\begin{figure}
    \centering
    \includegraphics[width=0.98\linewidth]{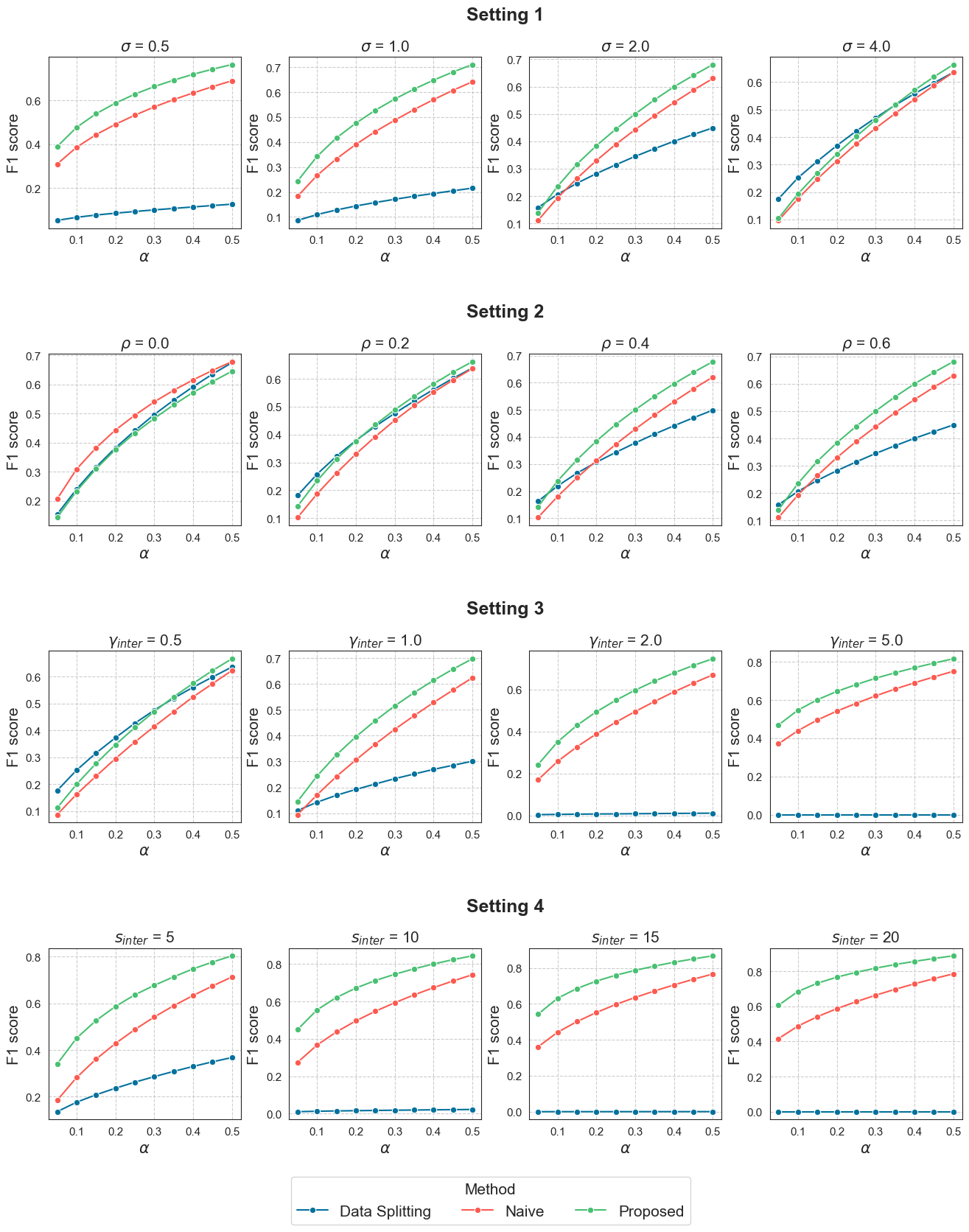}
    \caption{F1 scores of hypothesis testings of $\tjk$ obtained by naive inference, data splitting, and the proposed method in all simulation settings with varying $\alpha$}
    \label{fig:F1_sim}
\end{figure}

\section{Application: JFK flight data}
\label{sec:application}

In this section, we apply our method to a dataset consisting of flights information at the JFK airport in New York City during the time period from January 1, 2024 to January 30, 2024. 
The data can be accessed via the \texttt{R} package \texttt{anyflights} \cite{anyflights}. 
The dataset contains information on individual flights departing from JFK airport and includes three main components: (1) flight information, such as scheduled, departure and arrival times, destination, distance to destination, departure and arrival delays, and carrier details; (2) aircraft information, including the number of seats and the year of manufacture; and (3) weather information, such as temperature, precipitation, wind speed, and visibility, leading to a dataset of size $n \times p = 7358 \times 16$.

We fit our model to this data to predict arrival delay, measured in minutes, using flight, aircraft, and weather information that is available at the time of scheduled departure. 
Future information, such as actual departure and arrival time, which is not available at the time of prediction and thus not useful for this task, is excluded from our model fit.

\subsection{Reluctant modeling of interactions and evaluation}
\label{subsec:subsample}

To illustrate the accuracy of the reluctant model fitting approach with interactions, we first apply our method to a subsample containing about $10\%$ of the samples in the full data ($n = 7358$), resulting in a subsample of size $n_{\text{sub}} = 735$, and use the remaining data as a test set to evaluate this model.
% Then, on this subsample, we perform inference for the selected interaction terms in $\mathcal{T}^{M}$, using our proposed method and naive inference, described in Section \ref{subsubsec:naive}. 
For constructing high-quality nonlinear basis expansions, we model features with more than $40$ unique values nonlinearly, indexed by $\mathcal{N} \subset [p]$, and treat the remaining features linearly, indexed by $\mathcal{L} \subset [p]$. A summary of the linear and nonlinear features is provided in Table \ref{tab:l_nl_features} in Appendix \ref{subsec:app:l_nl}. 

For each feature $X_j$ modeled nonlinearly (i.e., $j \in \mathcal{N}$), we generate a B-spline basis expansion $\Psi_j$ of degree 2 with 6 knots. This results in a final design matrix $\Psi$ of dimension $735 \times 21$.
We implement the proposed method using $\Omega = \widehat\sigma^2 \times \frac{1 - r}{r} \Psi^\top \Psi$, with $r = 0.9$, as done in Section \ref{sec: sims}. Here, we use the plug-in estimate $\widehat{\sigma} = \sqrt{||y - \Psi(\Psi^\top \Psi)^{-1}\Psi^\top y||_2^2 / (n-q)}$ in place of the unknown dispersion parameter $\sigma$. For the selection of main effects, both the proposed method and the naive approach for interaction inference are implemented using the tuning parameter $\lambda_j = 0.5 \widehat\sigma \sqrt{n_{\text{sub}}} \sqrt{B_j} \sqrt{2 \log(q)}$, where the group size $B_j = 2$ for $j \in \mathcal{N}$ and $B_j = 1$ for $j \in \mathcal{L}$.

We now proceed to validate the discovery of significant $\tjk$'s identified in the $10\%$ subsample using the remaining $90\%$ of the data, which was held out during the modeling of main effects and interactions.
For each $(j,k) \in \mathcal{T}^M$, we fit an OLS model to the holdout $90\%$ of the dataset using the specification in \eqref{eq:postmodel}, and conduct a T-test for $\tjk$ on this holdout data. 
Given the large sample size of the test data, we treat the significant $\tjk$'s identified in the holdout set as a proxy for the true set of discoveries.
We then assess the quality of discoveries made on the $10\%$ subsample—using both the proposed method and naive inference—by comparing them to this ground truth in terms of precision, recall, and F1 score.
Put differently, we assess the degree of concordance between the findings from the heldout test data and those obtained from the $10\%$ subsample.

Figure~\ref{fig:main_effects} summarizes the selected nonlinear main effects identified by the proposed method and by naive inference. Notably, the added randomization term in the proposed method does not lead to a substantially different model fit. The non-randomized naive inference selects only one additional nonlinear main effect—scheduled arrival hour—which exhibits a relatively flat fitted additive effect on arrival delay, at zero.

\begin{figure}[h]
    \centering
    \includegraphics[width=\linewidth]{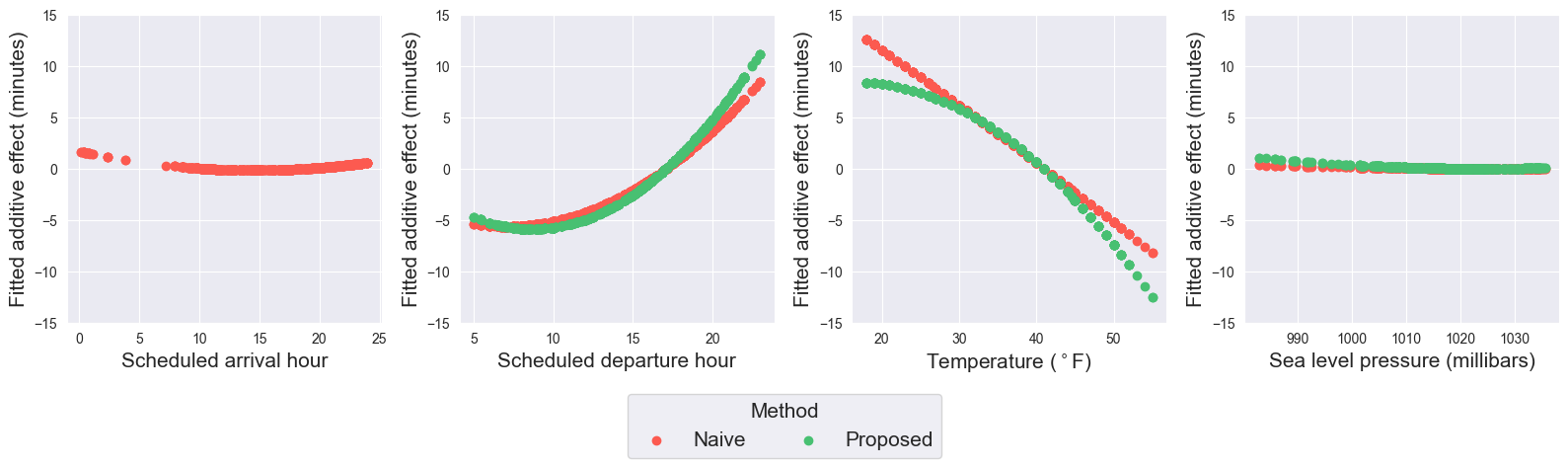}
    \caption{Fitted main effects using the proposed method and naive inference}
    \label{fig:main_effects}
\end{figure}

After fitting the main effects models, we construct $\mathcal{T}^{M}$ using the weak hierarchy rule, where $\mathcal{T}^{M} = \left\{(j,k) \in [p]^2: j \in M \vee k \in M \right\}$. Using a significance level of $\alpha = 0.1$, naive inference identifies 10 significant interaction terms, while the proposed method yields 13. Precision, recall, and F1 score based on the holdout validation set are reported in Table~\ref{tab:subsample}.

\begin{table}[h]
\centering
\begin{tabular}{@{}ccc@{}}
\toprule
          & Naive & Proposed       \\ \midrule
Precision & 0.4   & \textbf{0.615} \\
Recall    & 0.125 & \textbf{0.308} \\
F1 score  & 0.190 & \textbf{0.410} \\ \bottomrule
\end{tabular}
\label{tab:subsample}
\caption{Precison, recall, and F1 score of naive inference and the proposed method evaluated on the holdout sample}
\end{table}

As shown in Table~\ref{tab:subsample}, the proposed method outperforms naive inference across all three metrics. 
In the following section, we evaluate the replicability of these results by performing repeated subsampling, repeating the above analysis $500$ times.

\subsection{Replication analysis via repeated subsampling}

We repeatedly draw 500 subsamples of size %$10\%$ of the entire dataset, leading to subsamples of size
$n_{sub} = 735$ and in each repetition, we perform the analysis in Section \ref{subsec:subsample}.

\begin{figure}
    \centering
    \includegraphics[width=0.32\linewidth]{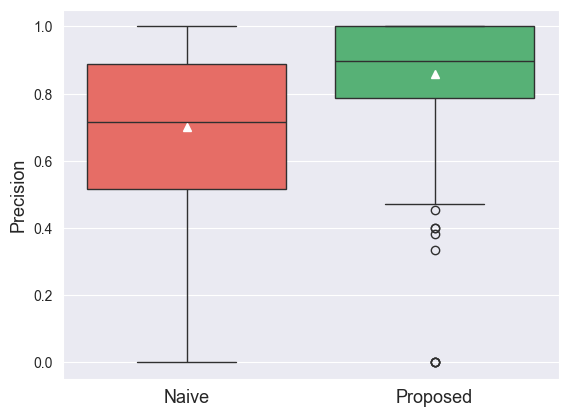}
    \includegraphics[width=0.32\linewidth]{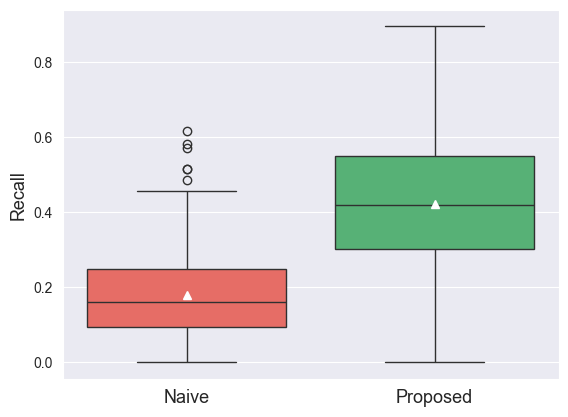}
    \includegraphics[width=0.32\linewidth]{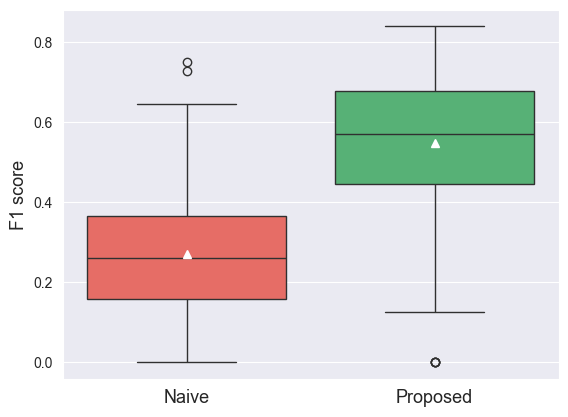}
    \caption{Precison, recall, and F1 scores of naive inference and the proposed method evaluated on the holdout sample, aggregated over 500 subsamples; empirical means are indicated by the white triangle}
    \label{fig:resampling}
\end{figure}

As shown in Figure~\ref{fig:resampling}, the proposed method consistently achieves higher precision, recall, and F1 score in recovering interaction effects, compared to the naive inference method. 
These findings are consistent with the results observed in the simulation studies. 

In Table~\ref{tab:freq_main}, we report the main effects that are most frequently selected.
Finally, in Figure \ref{fig:inter_freq}, we depict the frequency of significance of interaction signals that were most frequently deemed significant on the $500$ repetitions of the analysis.
This is based on the union of the top $30$ most frequently significant interaction signals from both naive inference and our approach. 

Note that the proposed method identifies the top interaction signals far more frequently than the naive approach. 
In particular, the interactions on the left half of Figure \ref{fig:inter_freq} are identified by the proposed method more than twice often, compared to naive inference. 
This further highlights the proposed method's ability to consistently recover key interaction signals, while the naive approach is likely susceptible to selection biases introduced during main effects selection, leading to an inferior model. A documentation of the features in Figure \ref{fig:inter_freq} is provided in Table \ref{tab:l_nl_features} in Appendix \ref{subsec:app:l_nl}.

\begin{table}[H]
\centering
\begin{tabular}{@{}ccc@{}}
\toprule
Feature                      & Naive & Proposed \\ \midrule
Sea level pressure (millibars) & 420   & 399      \\
Temperature ($^\circ$F)        & 400   & 324      \\
Visibility (miles)             & 370   & 357      \\
Scheduled departure hour       & 359   & 353      \\
Wind speed (mph)               & 348   & 381      \\ \bottomrule
\end{tabular}
\caption{Features that were selected for more than 300 times by both naive inference and data splitting, together with the number of times they are selected out of 500 replications}
\label{tab:freq_main}
\end{table}
\begin{figure}
    \centering
    \includegraphics[width=\linewidth]{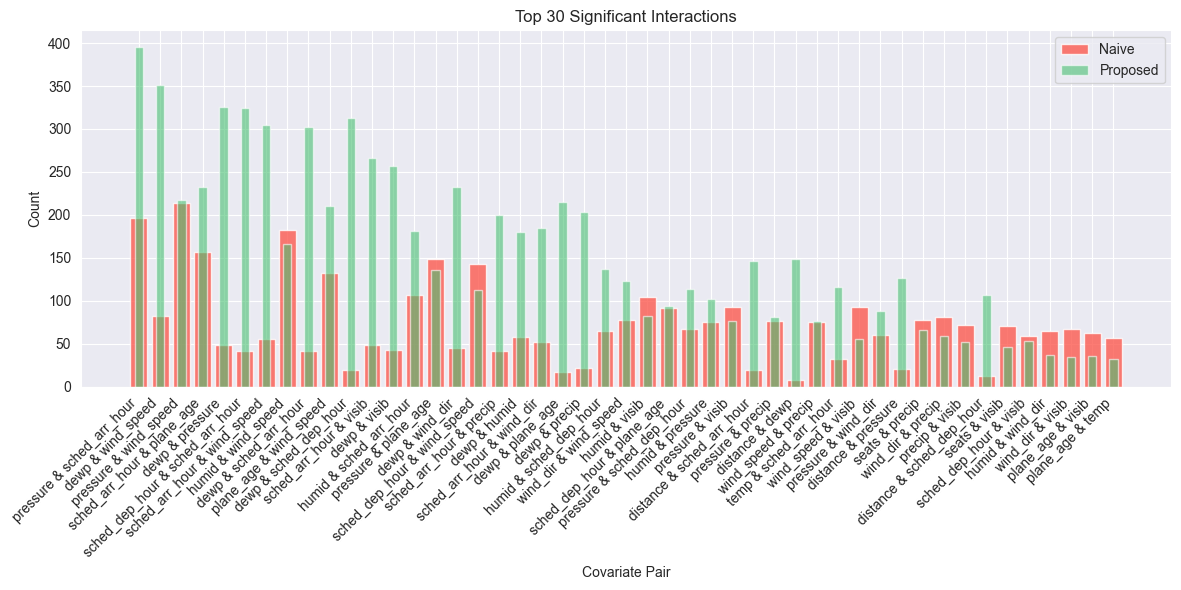}
    \caption{Frequency of significance of interaction signals that were most frequently deemed significant by naive inference the proposed method, aggregated over 500 subsamples; interactions are ordered along the x-axis according to the aggregated counts by the two methods.}
    \label{fig:inter_freq}
\end{figure}

\section{Conclusion}
\label{sec:conclusion}

In this paper, we develop a selective inference method to assess whether there is statistical evidence for moving beyond an additive, non-linear model. 
This approach follows the principle of reluctance, favoring simpler and more interpretable models without interactions unless it is truly warranted. 
We cast this as a data-adaptive hypothesis testing problem, where the null hypothesis assumes a SPAM that is fitted to the data prior to testing.
Our solution makes use of the recently developed selective inference techniques for selected groups after solving a group lasso problem, yielding valid p-values for the hypotheses on interaction effects.
Through extensive experiments on simulated and real data, we demonstrate that the resulting non-linear model---constructed via our rigorous, reluctant approach---achieves superior modeling accuracy compared to both naive strategies and existing selective inference methods such as data splitting.

While our primary focus in the paper is on testing for interactions, a natural by-product of our method is valid inference for the main effect variables included in the fitted SPAM, in the form of p-values and confidence intervals.
Another immediate extension of our method is to test groups of interaction terms jointly, rather than one at a time, using an F-test adjusted for the SPAM fit to compute the corresponding p-value.
 Finally, as a direction for future work, we note that a maximum likelihood approach could be developed for data-types beyond normal data by constructing an asymptotic selective likelihood, as developed in  \cite{huang2023selective, guglielmini2025selective}.

%%%%%%%%%%%%%%%%%%%%%%%%%%%%%%%%%%%%%%%%%%%%%%%%%%%%%%%%%%%%

\bibliographystyle{jasa3.bst}

\bibliography{references.bib}

\begin{thebibliography}{38}
\newcommand{\enquote}[1]{``#1''}
\expandafter\ifx\csname natexlab\endcsname\relax\def\natexlab#1{#1}\fi
\expandafter\ifx\csname url\endcsname\relax
  \def\url#1{{\tt #1}}\fi
\expandafter\ifx\csname urlprefix\endcsname\relax\def\urlprefix{URL }\fi

\bibitem[\protect\citeauthoryear{Bakshi, Huang, Panigrahi, and Dempsey}{Bakshi et~al.}{2024}]{bakshi2024inference}
Bakshi, S., Huang, Y., Panigrahi, S., and Dempsey, W. (2024), \enquote{Inference with Randomized Regression Trees,} {\em arXiv preprint arXiv:2412.20535\/}.

\bibitem[\protect\citeauthoryear{Berk, Brown, Buja, Zhang, and Zhao}{Berk et~al.}{2013}]{berk2013valid}
Berk, R., Brown, L., Buja, A., Zhang, K., and Zhao, L. (2013), \enquote{Valid post-selection inference,} {\em The Annals of Statistics\/}, 802--837.

\bibitem[\protect\citeauthoryear{Bien, Taylor, and Tibshirani}{Bien et~al.}{2013}]{bien2013lasso}
Bien, J., Taylor, J., and Tibshirani, R. (2013), \enquote{A lasso for hierarchical interactions,} {\em Annals of statistics\/}, 41, 1111.

\bibitem[\protect\citeauthoryear{Breiman}{Breiman}{2001}]{breiman2001statistical}
Breiman, L. (2001), \enquote{Statistical modeling: The two cultures (with comments and a rejoinder by the author),} {\em Statistical science\/}, 16, 199--231.

\bibitem[\protect\citeauthoryear{Caruana, Lou, Gehrke, Koch, Sturm, and Elhadad}{Caruana et~al.}{2015}]{caruana2015intelligible}
Caruana, R., Lou, Y., Gehrke, J., Koch, P., Sturm, M., and Elhadad, N. (2015), \enquote{Intelligible models for healthcare: Predicting pneumonia risk and hospital 30-day readmission,} in {\em Proceedings of the 21th ACM SIGKDD international conference on knowledge discovery and data mining\/}.

\bibitem[\protect\citeauthoryear{Chouldechova and Hastie}{Chouldechova and Hastie}{2015}]{chouldechova2015generalized}
Chouldechova, A. and Hastie, T. (2015), \enquote{Generalized additive model selection,} {\em arXiv preprint arXiv:1506.03850\/}.

\bibitem[\protect\citeauthoryear{Couch}{Couch}{2025}]{anyflights}
Couch, S.~P. (2025), {\em anyflights: Query 'nycflights13'-Like Air Travel Data for Given Years and Airports\/}, \urlprefix\url{https://CRAN.R-project.org/package=anyflights}. R package version 0.3.5.

\bibitem[\protect\citeauthoryear{D'Amour et~al.}{D'Amour et~al.}{2020}]{damour2022underspecification}
D'Amour, A. et~al. (2020), \enquote{Underspecification Presents Challenges for Credibility in Modern Machine Learning,} {\em J. Mach. Learn. Res.\/}, 23, 226:1--226:61.

\bibitem[\protect\citeauthoryear{Dharamshi, Neufeld, Motwani, Gao, Witten, and Bien}{Dharamshi et~al.}{2025}]{dharamshi2025generalized}
Dharamshi, A., Neufeld, A., Motwani, K., Gao, L.~L., Witten, D., and Bien, J. (2025), \enquote{Generalized data thinning using sufficient statistics,} {\em Journal of the American Statistical Association\/}, 120, 511--523.

\bibitem[\protect\citeauthoryear{Fisher, Rudin, and Dominici}{Fisher et~al.}{2019}]{fisher2019all}
Fisher, A., Rudin, C., and Dominici, F. (2019), \enquote{All models are wrong, but many are useful: Learning a variable's importance by studying an entire class of prediction models simultaneously,} {\em Journal of Machine Learning Research\/}, 20, 1--81.

\bibitem[\protect\citeauthoryear{Guglielmini, Claeskens, and Panigrahi}{Guglielmini et~al.}{2025}]{guglielmini2025selective}
Guglielmini, S., Claeskens, G., and Panigrahi, S. (2025), \enquote{Selective Inference in Graphical Models via Maximum Likelihood,} {\em arXiv preprint arXiv:2503.24311\/}.

\bibitem[\protect\citeauthoryear{Haris, Simon, and Shojaie}{Haris et~al.}{2022}]{haris2022generalized}
Haris, A., Simon, N., and Shojaie, A. (2022), \enquote{Generalized sparse additive models,} {\em Journal of machine learning research\/}, 23, 1--56.

\bibitem[\protect\citeauthoryear{Haris, Witten, and Simon}{Haris et~al.}{2016}]{haris2016convex}
Haris, A., Witten, D., and Simon, N. (2016), \enquote{Convex modeling of interactions with strong heredity,} {\em Journal of Computational and Graphical Statistics\/}, 25, 981--1004.

\bibitem[\protect\citeauthoryear{Hazimeh and Mazumder}{Hazimeh and Mazumder}{2020}]{hazimeh2020learning}
Hazimeh, H. and Mazumder, R. (2020), \enquote{Learning hierarchical interactions at scale: A convex optimization approach,} in {\em International Conference on Artificial Intelligence and Statistics\/}, PMLR.

\bibitem[\protect\citeauthoryear{Huang, Pirenne, Panigrahi, and Claeskens}{Huang et~al.}{2023}]{huang2023selective}
Huang, Y., Pirenne, S., Panigrahi, S., and Claeskens, G. (2023), \enquote{Selective inference using randomized group lasso estimators for general models,} {\em arXiv:2306.13829\/}.

\bibitem[\protect\citeauthoryear{Kivaranovic and Leeb}{Kivaranovic and Leeb}{2024}]{kivaranovic2024tight}
Kivaranovic, D. and Leeb, H. (2024), \enquote{A (tight) upper bound for the length of confidence intervals with conditional coverage,} {\em Electronic Journal of Statistics\/}, 18, 1677--1701.

\bibitem[\protect\citeauthoryear{Lee, Sun, Sun, and Taylor}{Lee et~al.}{2016}]{lee2016exact}
Lee, J.~D., Sun, D.~L., Sun, Y., and Taylor, J.~E. (2016), \enquote{Exact post-selection inference, with application to the lasso,} {\em The Annals of Statistics\/}, 44, 907--927.

\bibitem[\protect\citeauthoryear{Leiner, Duan, Wasserman, and Ramdas}{Leiner et~al.}{2023}]{leiner2023data}
Leiner, J., Duan, B., Wasserman, L., and Ramdas, A. (2023), \enquote{Data fission: splitting a single data point,} {\em Journal of the American Statistical Association\/}, 1--12.

\bibitem[\protect\citeauthoryear{Lin and Zhang}{Lin and Zhang}{2006}]{lin2006component}
Lin, Y. and Zhang, H.~H. (2006), \enquote{Component selection and smoothing in smoothing spline analysis of variance models,} {\em Annals of Statistics\/}, 34, 2272--2297.

\bibitem[\protect\citeauthoryear{Lou, Bien, Caruana, and Gehrke}{Lou et~al.}{2016}]{lou2016sparse}
Lou, Y., Bien, J., Caruana, R., and Gehrke, J. (2016), \enquote{Sparse partially linear additive models,} {\em Journal of Computational and Graphical Statistics\/}, 25, 1126--1140.

\bibitem[\protect\citeauthoryear{McCloskey}{McCloskey}{2024}]{mccloskey2024hybrid}
McCloskey, A. (2024), \enquote{Hybrid confidence intervals for informative uniform asymptotic inference after model selection,} {\em Biometrika\/}, 111, 109--127.

\bibitem[\protect\citeauthoryear{Nelder}{Nelder}{1977}]{nelder1977reformulation}
Nelder, J. (1977), \enquote{A reformulation of linear models,} {\em Journal of the Royal Statistical Society Series A: Statistics in Society\/}, 140, 48--63.

\bibitem[\protect\citeauthoryear{Neufeld, Dharamshi, Gao, and Witten}{Neufeld et~al.}{2024}]{neufeld2024data}
Neufeld, A., Dharamshi, A., Gao, L.~L., and Witten, D. (2024), \enquote{Data thinning for convolution-closed distributions,} {\em Journal of Machine Learning Research\/}, 25, 1--35.

\bibitem[\protect\citeauthoryear{Neufeld, Gao, and Witten}{Neufeld et~al.}{2022}]{neufeld2022tree}
Neufeld, A.~C., Gao, L.~L., and Witten, D.~M. (2022), \enquote{Tree-values: selective inference for regression trees,} {\em Journal of Machine Learning Research\/}, 23, 1--43.

\bibitem[\protect\citeauthoryear{Panigrahi, Fry, and Taylor}{Panigrahi et~al.}{2024}]{panigrahi2024exact}
Panigrahi, S., Fry, K., and Taylor, J. (2024), \enquote{Exact selective inference with randomization,} {\em Biometrika\/}, 111, 1109--1127.

\bibitem[\protect\citeauthoryear{Panigrahi, MacDonald, and Kessler}{Panigrahi et~al.}{2023}]{panigrahi2023approximate}
Panigrahi, S., MacDonald, P.~W., and Kessler, D. (2023), \enquote{Approximate post-selective inference for regression with the group lasso,} {\em Journal of Machine Learning Research\/}, 24, 1--49.

\bibitem[\protect\citeauthoryear{Panigrahi, Markovic, and Taylor}{Panigrahi et~al.}{2017}]{panigrahi2017mcmc}
Panigrahi, S., Markovic, J., and Taylor, J. (2017), \enquote{An MCMC-free approach to post-selective inference,} {\em arXiv preprint arXiv:1703.06154\/}.

\bibitem[\protect\citeauthoryear{Panigrahi and Taylor}{Panigrahi and Taylor}{2023}]{panigrahi2023approximateML}
Panigrahi, S. and Taylor, J. (2023), \enquote{Approximate selective inference via maximum likelihood,} {\em Journal of the American Statistical Association\/}, 118, 2810--2820.

\bibitem[\protect\citeauthoryear{Panigrahi, Taylor, and Weinstein}{Panigrahi et~al.}{2021}]{panigrahi2021integrative}
Panigrahi, S., Taylor, J., and Weinstein, A. (2021), \enquote{Integrative methods for post-selection inference under convex constraints,} {\em The Annals of Statistics\/}, 49, 2803--2824.

\bibitem[\protect\citeauthoryear{Peixoto}{Peixoto}{1987}]{peixoto1987hierarchical}
Peixoto, J.~L. (1987), \enquote{Hierarchical variable selection in polynomial regression models,} {\em The American Statistician\/}, 41, 311--313.

\bibitem[\protect\citeauthoryear{Rasines and Young}{Rasines and Young}{2023}]{rasines2023splitting}
Rasines, D.~G. and Young, G.~A. (2023), \enquote{Splitting strategies for post-selection inference,} {\em Biometrika\/}, 110, 597--614.

\bibitem[\protect\citeauthoryear{Ravikumar, Lafferty, Liu, and Wasserman}{Ravikumar et~al.}{2009}]{ravikumar2009sparse}
Ravikumar, P., Lafferty, J., Liu, H., and Wasserman, L. (2009), \enquote{Sparse additive models,} {\em Journal of the Royal Statistical Society Series B: Statistical Methodology\/}, 71, 1009--1030.

\bibitem[\protect\citeauthoryear{Rudin, Chen, Chen, Huang, Semenova, and Zhong}{Rudin et~al.}{2022}]{rudin2022interpretable}
Rudin, C., Chen, C., Chen, Z., Huang, H., Semenova, L., and Zhong, C. (2022), \enquote{{Interpretable machine learning: Fundamental principles and 10 grand challenges},} {\em Statistics Surveys\/}, 16, 1 -- 85, \urlprefix\url{https://doi.org/10.1214/21-SS133}.

\bibitem[\protect\citeauthoryear{Suzumura, Nakagawa, Umezu, Tsuda, and Takeuchi}{Suzumura et~al.}{2017}]{suzumura2017selective}
Suzumura, S., Nakagawa, K., Umezu, Y., Tsuda, K., and Takeuchi, I. (2017), \enquote{Selective inference for sparse high-order interaction models,} in {\em International Conference on Machine Learning\/}, PMLR.

\bibitem[\protect\citeauthoryear{Tay and Tibshirani}{Tay and Tibshirani}{2020}]{tay2020reluctant}
Tay, J.~K. and Tibshirani, R. (2020), \enquote{Reluctant generalised additive modelling,} {\em International Statistical Review\/}, 88, S205--S224.

\bibitem[\protect\citeauthoryear{Yu, Bien, and Tibshirani}{Yu et~al.}{2019}]{yu2019reluctant}
Yu, G., Bien, J., and Tibshirani, R. (2019), \enquote{Reluctant interaction modeling,} {\em arXiv:1907.08414\/}.

\bibitem[\protect\citeauthoryear{Yuan, Joseph, and Zou}{Yuan et~al.}{2009}]{yuan2009structured}
Yuan, M., Joseph, V.~R., and Zou, H. (2009), \enquote{Structured variable selection and estimation,} {\em The Annals of Applied Statistics\/}, 1738--1757.

\bibitem[\protect\citeauthoryear{Zrnic and Fithian}{Zrnic and Fithian}{2024}]{zrnic2024locally}
Zrnic, T. and Fithian, W. (2024), \enquote{Locally simultaneous inference,} {\em The Annals of Statistics\/}, 52, 1227--1253.

\end{thebibliography}

\newpage
\appendix

\section{Proof of theoretical results}

\subsection{Marginal distribution of the key statistics}
\begin{proof}[Proof of Lemma \ref{lem:pre_sel_law}]
We first consider the sub-vector $\left(\htjk, (\hbjk)^\top\right)^\top$ in the first two components of $\Shat$. From \eqref{eq:lsmodel}, standard OLS theory establishes that 
$$
\begin{pmatrix}
    \htjk\\
    \hbjk
\end{pmatrix}
\sim N_{q_M + 1}\left(\begin{pmatrix}
    \tjk\\
    \bjk
\end{pmatrix}, \overline{\Sigma}^{M}_{jk}\right),
$$
under the model $y \sim N_n(\Psi_{M} \bjk + \interjk \tjk, \sigma^2 I_n).$ Then, 
\begin{align*}
    \perpjk =& -\Psi_{-M}^\top \left(y - \overline{Z}^{M}_{jk}
\begingroup
\renewcommand*{\arraystretch}{0.8}\begin{pmatrix}
\htjk\\
\hbjk
\end{pmatrix}
\endgroup \right)\\
=& -\Psi_{-M}^\top \left(I_n -  \overline{Z}^{M}_{jk}  \left(\overline{Z}^{M}_{jk}\right)^{+}\right) y \\
=& -\Psi_{-M}^\top \left(I_n -  \overline{Z}^{M}_{jk} \left(\overline{Z}^{M}_{jk}\right)^{+}\right) \left(\overline{Z}^{M}_{jk} \begin{pmatrix}
    \tjk\\
    \bjk
\end{pmatrix} + e\right)\\
=& -\Psi_{-M}^\top \left(I_n -  \overline{Z}^{M}_{jk} \left(\overline{Z}^{M}_{jk}\right)^{+}\right) e,
\end{align*}
where $e \sim N_n(0_n, \sigma^2 I_n)$, which implies 
$$\perpjk \sim N_n(0_{q-q_M}, \widetilde\Sigma^M_{jk}).$$
Then the result follows by establishing that 
$$\operatorname{Cov}\left(\begin{pmatrix}
    \htjk\\
    \hbjk
\end{pmatrix}, \perpjk \right) = 0_{q_M \times (q-q_M)},$$
but this follows from the orthogonality of the OLS coefficients and the residuals.    
\end{proof}

\subsection{Developing the conditional distribution of the key statistics}

\begin{proof}[Proof of Lemma \ref{lem:cov}]
Recall \eqref{eq:KKT}, which expresses the KKT conditions as
\begin{align}
\omega &= - \Psi^\top y + \left(\Psi^\top \Psi + \epsilon I_q \right) \begin{pmatrix}
		\left(\widehat\gamma_j \widehat{u}_j \right)_{j \in M}\\
		\left(0_{B_j}\right)_{j \notin M}
	\end{pmatrix}
 + \begin{pmatrix}
    \left(\lambda_j \widehat{u}_j\right)_{j \in M} \\
    \left(\lambda_j \widehat{z}_j\right)_{j \notin M}
\end{pmatrix}.
\end{align}
We will show that we can write the right-hand side in terms of our key statistics $\Shat$ instead of $y$. First, by \eqref{eq:in-terms-of-A},
$$
- \Psi^\top y = 
		{A}^{M}_{jk}
		\begin{pmatrix}
			\htjk\\
			\hbjk
		\end{pmatrix} + 
		\begin{pmatrix}
			{0}_{q_M}\\
			\perpjk
		\end{pmatrix}.
$$
Second, we observe that
 \begin{equation*}
 \begin{gathered}
 \left(\Psi^\top \Psi + \epsilon I_q \right) \begin{pmatrix}
		\left(\widehat\gamma_j \widehat{u}_j\right)_{j \in M}\\
		\left(0_{B_j}\right)_{j \notin M}
	\end{pmatrix} = {B}(\widehat{{U}}) \widehat{{\gamma}}.
 \end{gathered}
 \end{equation*}
 Thus, defining $c(U,Z)$ as in the statement of the lemma, we get 
 $$
 \omega =  \Pi_{\Shat}(\widehat{{\gamma}}, \widehat{{U}}, \widehat{{Z}}).
 $$
 Inverting the mapping $\Pi_{\Shat}$ proves our claim.
\end{proof}

\subsubsection{Proof of Proposition \ref{prop:conditional:density}}
\begin{proof}[Proof of Proposition \ref{prop:conditional:density}]
We derive the conditional distribution 
$$\left(\Shat,\ \widehat{\gamma},\ \widehat{U},\ \widehat{Z} \right) \ \Big\lvert \ \left\{\widehat{{\gamma}}\succ {0}_{|M|}, \ \widehat{U} \in  \mathcal{U}_M, \ \widehat{Z} \in \mathcal{Z}_M\right\},$$
by truncating the marginal density of $\left(\Shat,\ \widehat{\gamma},\ \widehat{U},\ \widehat{Z} \right)$ to the set $\real \times \real^{|M|}_{+} \times \mathcal{U}_M \times \mathcal{Z}_M$.
Note that the density function of this distribution, at $(S, \gamma, U, Z)$, is proportional to 
\begin{equation}
\label{cond:density:step1}
f(S) \cdot h_S(\gamma, U, Z) \cdot 1_{\real^{|M|}_{+}}(\gamma) \cdot 1_{\mathcal{U}_M}(U) \cdot 1_{\mathcal{Z}_M}(Z),
\end{equation}
where $f(\cdot)$ is the marginal density of the key statistics $\Shat$ and $h_S(\cdot)$ is the conditional density of 
$
\left(\widehat{{\gamma}},\ \widehat{{U}},\ \widehat{{Z}}\right) \mid \Shat=S.
$

Based on Lemma \ref{lem:pre_sel_law}, we already know that $f(S) \equiv \phi\left(S; \nu^{M}_{jk}, \Sigma^{M}_{jk}\right)$. 
In what follows, we determine $h_S(\gamma, U, Z)$ by using $\Pi_{\Shat}$, defined in Lemma \ref{lem:cov}, as a change of variables mapping from the randomization $\omega$ in \eqref{eq:grlasso} to $(\widehat{{\gamma}},\ \widehat{{U}},\ \widehat{{Z}})$, while conditioning on the value of $\Shat$.

Since $\Shat$ depends only on $y$ and not on the randomization variable $\omega$, which is independent of $y$, the density of $\omega \mid \Shat = S$ is equal to $\phi\left(\omega; {0}_q, \Omega \right)$.
Now, we apply the change of variable mapping 
    $$(\widehat{{\gamma}}, \widehat{{U}}, \widehat{{Z}}) =  \Pi^{-1}_{\Shat}(\omega)$$
where $\widehat{{\gamma}} \in \real^{|M|}_{+}$ and $\widehat{{U}} \in \mathcal{U}_M$.
At $\Shat=S$, this leads to 
$$h_S(\gamma, U, Z) \cdot 1_{\mathcal{U}_M}(U) = \phi\left(\Pi_{S}({\gamma}, {U}, {Z});
		{0}_q, 
            \Omega
		\right)
		\cdot \det D_{\Pi_{S}}(\gamma, U, Z) \cdot 1_{\mathcal{U}_M}(U)$$
 where  $D_{\Pi_{S}}(\gamma, U, Z)$ denotes the Jacobian matrix of the change of variable mapping $\Pi_{S }(\gamma, U, Z)$.
 Applying the same calculations as done in the proof of Theorem 2 in \cite{panigrahi2023approximate}, we obtain
$$\det D_{\Pi_S}(\gamma, U, Z)
  \propto \det\left(\Gamma(\gamma) + \left(\mathcal{U}_\perp(U)\right)^\top Q^{-1}_M\Lambda \mathcal{{U}}_\perp(U)\right).
$$
Thus, we conclude that the density in \eqref{cond:density:step1} is proportional to
\begin{align*}
\phi(S; \nu^{M}_{jk}, \Sigma^{M}_{jk}) \cdot \phi\left(\Pi_{S}({\gamma}, {U}, {Z});{0}_q, \Omega \right) & \cdot \det\left(\Gamma(\gamma) +  \left(\mathcal{U}_\perp(U)\right)^\top Q^{-1}_M\Lambda \mathcal{U}_\perp(U)\right)\\
&\;\;\;\;\;\;\;\;\;\;\;\;\;\;\;\;\;\;\;\;\;\; \times 1_{\real^{|M|}_{+}}(\gamma) \cdot 1_{\mathcal{U}_M}(U) \cdot 1_{\mathcal{Z}_M}(Z).
\end{align*}
\end{proof}

\subsubsection{Proof of Theorem \ref{thm:selective_likelihood}}
\begin{comment}
Furthermore, noticing that the marginal density of $\perpjk$ factors out in the above joint density, the following result presents the joint distribution of $(\htjk,\ \hbjk,\ \widehat{{\gamma}},\ \widehat{{U}},\ \widehat{{Z}})$, conditional on the value of $\perpjk$.

\begin{cor}
\label{lem:jointdist}
The distribution 
$$(\htjk,\ \hbjk,\ \widehat{{\gamma}},\ \widehat{{U}},\ \widehat{{Z}}) \mid \perpjk = a$$
has density function, evaluated at $\htjk = t,\ \hbjk=b,\ \widehat{{\gamma}}=\gamma,\ \widehat{{U}}=U,\ \widehat{{Z}}=Z$, proportional to
\begin{align*}
	\phi\left(\begingroup
	\renewcommand*{\arraystretch}{0.8}\begin{pmatrix}
		t\\
		b
	\end{pmatrix}\endgroup; 
	\begingroup
	\renewcommand*{\arraystretch}{0.8}
	\begin{pmatrix}
		\htjk\\
		\hbjk
	\end{pmatrix}
	\endgroup, \overline{\Sigma}^{M}_{jk}\right)
	\cdot
	\phi\left(\Pi_{s}({\gamma}, {U}, {Z});
		{0}_q, \Omega
		\right)
		\cdot J_{\Pi}({{\gamma}}, U,Z).
\end{align*}
\end{cor}
\end{comment}
We first state the following lemma.
\begin{lem}[Exact joint likelihood]
\label{lem:exact_likelihood}
Consider a density function that is proportional to
\begin{equation*}
    \phi\left(
    \begingroup
    \renewcommand*{\arraystretch}{0.8}
    \begin{pmatrix}
    t\\
    b
    \end{pmatrix}
    \endgroup; 
    \begingroup
    \renewcommand*{\arraystretch}{0.8}
    \begin{pmatrix}
    \tjk\\
    \bjk
    \end{pmatrix}\endgroup, \overline{\Sigma}^{M}_{jk}\right)  \cdot \phi\left(\Pi_{S}({g}, {U}, {Z});{0}_q, \Omega\right)  \cdot \det D_{\Pi_{S}}(g, U, Z),
\end{equation*} 
when evaluated at $(t, b, g)$. This density function admits an alternative representation and can also be expressed as being proportional to
\begin{align*}
	\phi\left(\begin{pmatrix}
    t\\
    b
\end{pmatrix}; \overline{R}\begin{pmatrix}
    \tjk\\
    \bjk
\end{pmatrix}+ \overline{s}, \overline{\Theta}\right) 
\cdot 
\phi\left(g; \overline{A}\begin{pmatrix}
    t\\
    b
\end{pmatrix}+ \overline{b}, \overline{\Omega}\right)
	\cdot
	\det D_{\Pi_{S}}(g, U,Z).
	\label{eq:jointdensity}
\end{align*}
\end{lem}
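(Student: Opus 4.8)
The plan is to recognize the first two factors of the given density as a joint Gaussian density in the block vector $(t,b,g)$ and to re-factor it into the marginal density of $(t,b)$ times the conditional density of $g \mid (t,b)$. First I would observe that, for fixed $(U,Z)$ and fixed $S$, the map $\Pi_S(g,U,Z) = A^M_{jk}\binom{t}{b} + B(U)g + c(U,Z)$ is affine in $g$ (and the $\binom{t}{b}$ appearing there is literally the coordinate $(t,b)$ at which we evaluate, since $\Shat$ contains $(\htjk,\hbjk)$ as its first coordinates). Consequently $\phi(\Pi_S(g,U,Z);0_q,\Omega)$, viewed as a function of $(t,b,g)$, is proportional to $\exp\{-\tfrac12 (A^M_{jk}\binom{t}{b} + B(U)g + c(U,Z))^\top \Omega^{-1} (A^M_{jk}\binom{t}{b} + B(U)g + c(U,Z))\}$, which is the exponential of a quadratic form in $(t,b,g)$. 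Multiplying by $\phi\big(\binom{t}{b};\binom{\tjk}{\bjk},\overline{\Sigma}^M_{jk}\big)$ adds another quadratic in $(t,b)$, so the product of the first two factors is $\exp$ of a jointly quadratic (hence jointly Gaussian, up to normalization) form in $(t,b,g)$; the determinant factor $\det D_{\Pi_S}(g,U,Z)$ does not involve $(t,b)$ and is carried along unchanged.

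Next I would complete the square in $g$ to peel off the conditional law of $g$ given $(t,b)$. Collecting the $g$-dependent terms, the quadratic form in $g$ is $\tfrac12 g^\top (B(U)^\top\Omega^{-1}B(U)) g - g^\top B(U)^\top\Omega^{-1}(A^M_{jk}\binom{t}{b}+c(U,Z))$, which, upon completing the square, yields a Gaussian in $g$ with covariance $\overline{\Omega} = (B(\widehat U)^\top\Omega^{-1}B(\widehat U))^{-1}$ and mean $-\overline{\Omega}B(\widehat U)^\top\Omega^{-1}(A^M_{jk}\binom{t}{b}+c(\widehat U,\widehat Z)) = \overline{A}\binom{t}{b} + \overline{b}$ — exactly matching the definitions of $\overline{A}$ and $\overline{b}$ given in the excerpt. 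This is precisely the factor $\phi(g;\overline{A}\binom{t}{b}+\overline{b},\overline{\Omega})$; note $(U,Z)$ in the density argument are the observed $(\widehat U,\widehat Z)$, which is why the bar-quantities use $\widehat U,\widehat Z$.

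The remaining, $g$-free part is then a quadratic form purely in $(t,b)$, obtained as the sum of three pieces: the $\tfrac12(t,b)^\top(\overline{\Sigma}^M_{jk})^{-1}(t,b)$-type term from the first factor, the cross/quadratic term $\tfrac12 (A^M_{jk}\binom{t}{b}+c)^\top\Omega^{-1}(A^M_{jk}\binom{t}{b}+c)$ from the second factor, and the \emph{negative} term $-\tfrac12(\overline{A}\binom{t}{b}+\overline b)^\top\overline{\Omega}^{-1}(\overline A\binom{t}{b}+\overline b)$ left over from completing the square in $g$. I would collect the quadratic coefficient of $(t,b)$, which is $(\overline{\Sigma}^M_{jk})^{-1} + (A^M_{jk})^\top\Omega^{-1}A^M_{jk} - \overline{A}^\top\overline{\Omega}^{-1}\overline{A}$; this is exactly $\overline{\Theta}^{-1}$ by definition. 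Reading off the linear coefficient and completing the square once more in $(t,b)$ gives a Gaussian with covariance $\overline{\Theta}$ and mean $\overline{\Theta}\big((\overline{\Sigma}^M_{jk})^{-1}\binom{\tjk}{\bjk} + \text{(terms from }c, \overline b\text{)}\big) = \overline{R}\binom{\tjk}{\bjk} + \overline{s}$, matching the stated $\overline{R}=\overline{\Theta}(\overline{\Sigma}^M_{jk})^{-1}$ and $\overline{s}=\overline{\Theta}(\overline{A}^\top\overline{\Omega}^{-1}\overline b - (A^M_{jk})^\top\Omega^{-1}c)$. Putting the three factors back together yields the claimed representation, since proportionality constants (including those absorbing the $\det(\cdot)$ normalizations of the Gaussians) do not depend on $(t,b,g)$.

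The main obstacle is purely bookkeeping: tracking the linear terms through the two successive square-completions and verifying that the constants produced ($\overline{s}$, and the additive constant that gets absorbed) coincide with the definitions given before the theorem. The structural content — a quadratic-times-quadratic Gaussian integrand re-factors as marginal-times-conditional — is standard; the care lies in matching every term to the precompiled $\overline{A},\overline{b},\overline{\Theta},\overline{R},\overline{s}$ and confirming that the $(t,b)$-independence of $\det D_{\Pi_S}$ lets it pass through both steps untouched. I would also double-check that $B(\widehat U)^\top\Omega^{-1}B(\widehat U)$ is invertible (it is, since $B(\widehat U)$ has full column rank $|M|$ for generic $\widehat U$ and $\Omega\succ0$), so that $\overline{\Omega}$ and hence the completion of the square in $g$ are well-defined.
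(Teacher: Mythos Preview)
Your proposal is correct and follows exactly the approach the paper takes: the paper's own proof simply states that the result follows from ``straightforward algebra, by expanding the quadratic forms in the exponents'' of the two Gaussian density factors and reordering terms, pointing to Proposition~3.4 of \cite{huang2023selective} for a similar derivation. Your two successive completions of the square---first in $g$ to extract the conditional $\phi(g;\overline{A}\binom{t}{b}+\overline{b},\overline{\Omega})$, then in $(t,b)$ to obtain $\phi\big(\binom{t}{b};\overline{R}\binom{\tjk}{\bjk}+\overline{s},\overline{\Theta}\big)$---are precisely that algebra spelled out, and your verification that the resulting precision, mean, and cross terms match the predefined $\overline{\Omega},\overline{A},\overline{b},\overline{\Theta},\overline{R},\overline{s}$ is the only content of the argument.
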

\begin{proof}
    The proof follows from straightforward algebra, by expanding the quadratic forms in the exponents of density functions $\phi\left(
    \begingroup
    \renewcommand*{\arraystretch}{0.8}
    \begin{pmatrix}
    t\\
    b
    \end{pmatrix}
    \endgroup; 
    \begingroup
    \renewcommand*{\arraystretch}{0.8}
    \begin{pmatrix}
    \tjk\\
    \bjk
    \end{pmatrix}\endgroup, \overline{\Sigma}^{M}_{jk}\right)$ and $\phi\left(\Pi_{S}({g}, {U}, {Z});{0}_q, \Omega\right)$, and reordering the terms. 
    A similar derivation can be found in the proof of Proposition 3.4 in \cite{huang2023selective}.
\end{proof}

Now, we are ready to state the proof of Theorem \ref{thm:selective_likelihood}.

\begin{proof}[Proof of Theorem \ref{thm:selective_likelihood}]
Recall from \eqref{eq:key} that $\Shat=\begin{pmatrix}
	\htjk, &
	\left(\hbjk\right)^\top, &
	\left(\perpjk\right)^\top
\end{pmatrix}^\top$.  Using Proposition \ref{prop:conditional:density}, and the independence between $\perpjk$ and $(\htjk, \hbjk)$, the conditional distribution 
$$\left(\htjk,\ \hbjk,\ \widehat{\gamma},\ \widehat{U},\ \widehat{Z} \right) \ \Big\lvert \ \left\{\widehat{{\gamma}}\succ {0}_{|M|}, \ \widehat{U} \in  \mathcal{U}_M, \ \widehat{Z} \in \mathcal{Z}_M,\ \perpjk = \mathcal{A}\right\}$$
evaluated at $(t, b, g, U, Z)$ has density
\begin{align*}
    \phi\left(
    \begingroup
    \renewcommand*{\arraystretch}{0.8}
    \begin{pmatrix}
    t\\
    b
    \end{pmatrix}
    \endgroup; 
    \begingroup
    \renewcommand*{\arraystretch}{0.8}
    \begin{pmatrix}
    \tjk\\
    \bjk
    \end{pmatrix}\endgroup, \overline{\Sigma}^{M}_{jk}\right)  \cdot \phi\left(\Pi_{S}({g}, {U}, {Z});{0}_q, \Omega\right) & \cdot \det D_{\Pi_{S}}(g, U, Z)  \cdot 1_{\real^{|M|}_{+}}(g) \cdot 1_{\mathcal{U}_M}(U) \cdot 1_{\mathcal{Z}_M}(Z).
\end{align*}
Further conditioning on $\{\widehat{{U}} = U\ \widehat{{Z}} = Z\}$,
%and applying Lemma \ref{lem:exact_likelihood} from this appendix, 
the density of $$(\htjk,\ \hbjk,\ \widehat{{\gamma}}) \mid \{\widehat{{\gamma}} \succ {0}_{|M|},\ \widehat{{U}} = U,\  \widehat{{Z}} = Z, \ \perpjk = \mathcal{A}\}$$
	evaluated at $(t, b, g)$ is equal to 
\begin{equation*}
    \dfrac{\phi\left(
    \begingroup
    \renewcommand*{\arraystretch}{0.8}
    \begin{pmatrix}
    t\\
    b
    \end{pmatrix}
    \endgroup; 
    \begingroup
    \renewcommand*{\arraystretch}{0.8}
    \begin{pmatrix}
    \tjk\\
    \bjk
    \end{pmatrix}\endgroup, \overline{\Sigma}^{M}_{jk}\right)  \cdot \phi\left(\Pi_{S}({g}, {U}, {Z});{0}_q, \Omega\right)  \cdot \det D_{\Pi_{S}}(g, U, Z)  \cdot 1_{\real^{|M|}_{+}}(g)}{\bigint \phi\left(
    \begingroup
    \renewcommand*{\arraystretch}{0.8}
    \begin{pmatrix}
    t'\\
    b'
    \end{pmatrix}
    \endgroup; 
    \begingroup
    \renewcommand*{\arraystretch}{0.8}
    \begin{pmatrix}
    \tjk\\
    \bjk
    \end{pmatrix}\endgroup, \overline{\Sigma}^{M}_{jk}\right)  \cdot \phi\left(\Pi_{S}({g'}, {U}, {Z});{0}_q, \Omega\right)  \cdot \det D_{\Pi_{S}}(g', U, Z)  \cdot 1_{\real^{|M|}_{+}}(g') dg' dt' db'}.
\end{equation*}
\medskip

Then, using Lemma~\ref{lem:exact_likelihood}, we observe that this density is further proportional to
\begin{align*}
	\phi\left(\begingroup
	\renewcommand*{\arraystretch}{0.8}
	\begin{pmatrix}
    t\\
    b
	\end{pmatrix}
	\endgroup; 
	\overline{R}\begingroup
	\renewcommand*{\arraystretch}{0.8}
	\begin{pmatrix}
    \tjk\\
    \bjk
	\end{pmatrix}
	\endgroup+ \overline{s}, \overline{\Theta}\right) 
\cdot 
\phi\left(g; 
	\overline{A}
	\begingroup
	\renewcommand*{\arraystretch}{0.8}
	\begin{pmatrix}
    t\\
    b
	\end{pmatrix}\endgroup + \overline{b}, \overline{\Omega}\right)
	\cdot
	\det D_{\Pi_{S}}(g, U, Z)
	\cdot
	1_{\real^{|M|}_{+}}(g).
	\label{eq:jointdensity}
\end{align*}

Marginalizing over $\widehat{{\gamma}}$ yields the joint likelihood
\begin{align*}
&c(\tjk,\bjk)^{-1} \times\\
&\phi\left(\begingroup
	\renewcommand*{\arraystretch}{0.8}
	\begin{pmatrix}
    \htjk\\
    \hbjk
	\end{pmatrix} 
	\endgroup; \overline{R}\begingroup
	\renewcommand*{\arraystretch}{0.8}
	\begin{pmatrix}
    \tjk\\
    \bjk
	\end{pmatrix} 
	\endgroup
+ \overline{s}, \overline{\Theta}\right) 
\cdot \int_{{{\gamma}} \succ {0}_{|M|}}
\phi\left({{\gamma}}; \overline{A}
\begingroup
	\renewcommand*{\arraystretch}{0.8}
	\begin{pmatrix}
    \htjk\\
    \hbjk
	\end{pmatrix} 
	\endgroup + \overline{b}, \overline{\Omega}\right)
	\cdot
	\det D_{\Pi_{S}}({{\gamma}}, U,Z) \ d{{\gamma}}.
\end{align*}
Dropping the integral that is independent of the parameters $\tjk, \bjk$ and taking the logarithm gives the desired result in \eqref{eq:selectivelikelihood}.
\end{proof}

\section{Additional details of the JFK flight data analysis}
\subsection{Documentation of features in Section \ref{sec:application}}
\label{subsec:app:l_nl}

\begin{table}[H]
\centering
\begin{tabular}{@{}cccc@{}}
\toprule
Feature name & Meaning & Source   & Type      \\ \midrule
distance & Distance between airports (miles) & Flight   & Nonlinear \\
sched\_dep\_hour & Scheduled departure hour          & Flight   & Nonlinear \\
sched\_arr\_hour & Scheduled arrival hour            & Flight   & Nonlinear \\
temp & Temperature ($^\circ$F)           & Weather  & Nonlinear \\
dewp & Dewpoint ($^\circ$F)              & Weather  & Nonlinear \\
humid & Relative humidity                 & Weather  & Nonlinear \\
pressure & Sea level pressure (millibars)    & Weather  & Nonlinear \\
wind\_dir & Wind direction (degrees)          & Weather  & Linear    \\
wind\_speed & Wind speed (mph)                  & Weather  & Linear    \\
precip & Precipitation (inches)            & Weather  & Linear    \\
visib & Visibility (miles)                & Weather  & Linear    \\
plane\_age & Plane age (miles)                 & Aircraft & Linear    \\
seats & Number seats on plane             & Aircraft & Linear    \\ \bottomrule
\end{tabular}
\label{tab:l_nl_features}
\caption{Feature names, meaning, source, and whether modeled linearly/nonlinearly in Section \ref{subsec:subsample}}
\end{table}

\end{document}